\documentclass[a4paper,12pt]{article}
\usepackage{amsfonts}
\usepackage{amsmath}
\usepackage{amssymb}
\usepackage{bbm}
\usepackage{kotex}
\usepackage{graphicx}
\usepackage{ulem}
\usepackage{cancel}
\usepackage{color}

\textwidth 16.0cm  \textheight 25.08cm \topmargin -2.0cm
\oddsidemargin -0.0cm \evensidemargin 1.0cm
\newtheorem{theorem}{Theorem}

\newtheorem{corollary}[theorem]{Corollary}

\newtheorem{definition}[theorem]{Definition}

\newtheorem{lemma}[theorem]{Lemma}

\newtheorem{proposition}[theorem]{Proposition}
\newtheorem{remark}[theorem]{Remark}

\newenvironment{proof}[1][Proof]{\textbf{#1.} }{\ \rule{0.5em}{0.5em}}
\scrollmode


\newcommand{\unit}{\hbox{\rm 1\kern-2.8truept l}}

\newcommand{\tr}{\hbox{\rm tr}}
\newcommand{\md}{\hbox{\rm d}}
\newcommand{\me}{\hbox{\rm e}}
\newcommand{\mi}{\mathrm{i}}

\newcommand{\XY}{X\kern-9trueptY}

%
%

\begin{document}
\title{Environment induced entanglement in Gaussian open quantum systems}
\author{A. Dhahri$^{(1)}$, F. Fagnola$^{(1)}$, D. Poletti$^{(2)}$, H.J. Yoo$^{(3)}$}

\maketitle

\begin{abstract}
We show that a bipartite Gaussian quantum system interacting with an external Gaussian environment may possess
a unique Gaussian entangled stationary state and that any initial state converges towards this stationary state.
We discuss dependence of entanglement on temperature and interaction strength and show that one can find
entangled stationary states only for low temperatures and weak interactions.
\end{abstract}

{\it Alternative title: Entangled stationary states in Gaussian open quantum systems}

Keywords: Open quantum system, Gaussian state, entangled state.

\section{Introduction}

The theory of open quantum systems is a fundamental tool of research in quantum mechanics and its applications because,
in reality, every system interacts with an external environment. The effect of the interaction is generally that of a
disturbing noise that may be small, but it can hardly be neglected. Luckily, recent results suggest that one can also
take advantage of the presence of noise in some efficient way to generate or enhance certain properties of the system states
such as entanglement (see \cite{BCFS,BFP,BeFl,HaackJoye,TicVio} and the references therein).

Entanglement is a distinctive feature of quantum states and a key resource in quantum information,
computing and communication. Generating and stabilizing entangled states is a difficult task.
A strategy known as quantum reservoir engineering where the coupling to the environment provides the
dissipation required for stabilization has been recently investigated (\cite{RoRouSel}).
It has been demonstrated that entangled steady states arise in the presence of dissipation
for finite dimensional (mostly fermion) systems (\cite{BFP,BeFl,HaackJoye,TicVio}).

For infinite-dimensional boson systems, there is doubt whether it is possible to find entangled Gaussian
states that come into play in various important quantum information processing activities such as entanglement distillation.
In \cite{LJB} it has been shown that entangled states cannot be obtained when a system with strictly positive quadratic hamiltonian
is weakly coupled with a reservoir. In this paper we show that, dropping this condition,
entanglement generation and stabilization is possible by appropriate modulation of squeezing parameters.

A big challenge when investigating open quantum systems is to characterize invariant states
when they are coupled to environments that drive the system out of equilibrium (see
\cite{AcFaQu,BoRCUr,JPW}). However, in the case of Gaussian Markov systems,
explicit formulas (\cite{AFPOSID22,DeVaVe,GJN,Po2022,Teret}) allow one to write invariant states and analyze them.
Furthermore, necessary and sufficient conditions (\cite{ViWe,WeWo}) are available to establish whether
a certain state of a bipartite system is entangled or not.

In this article we consider a bipartite Gaussian system consisting of two one-mode subsystems with Hamiltonian
\begin{equation}\label{eq:HS}
H_S = \frac{\omega}{2}\left(a_1 a^\dagger_{2}+a^\dagger_1 a_{2}\right)
+\frac{\kappa}{2}\sum_{j=1}^2\left( a_j^2 + a_j^{\dagger 2}\right)
\end{equation}
where $a_j$ and $a_j^\dagger$ are the usual creation and annihilation operators and $\omega,\kappa$ real constants.
The parameter $\kappa$ is needed to induce squeezing that is a necessary condition for entanglement
of quantum Gaussian states (see \cite{WoEiPl}).

First, we analyze the situation in which only one of the two parties is disturbed by noise.
A mode of the system interacts with a ``reservoir'' consisting of a damped and pumped quantum harmonic oscillator
(see \cite{AlLe} Section 2.2 and equation \eqref{eq:1noiseL1L2} here) with inverse temperature $\beta$.
Applying known results on the asymptotic behaviour of Gaussian open quantum systems (\cite{FP-IDAQP24}),
we show that, if the constant $\kappa\not=0$, the reservoir temperature and the system-reservoir coupling constant
$g\not=0$ are small, the whole three-mode Gaussian system has a unique Gaussian invariant
state and any initial state converges towards this invariant state in trace norm. Then we take its partial trace
onto the bipartite system and show that the bipartite Gaussian state we get is entangled if
$\kappa\not=0$ is not too small, comparable with $\omega$, say, for small reservoir temperatures and small coupling
strength. Theorem \ref{th:bgk} and Corollary \ref{cor:1noise-ent} provide more precise quantitative information.

Second, we analyze the situation in which both parties are disturbed by noise. We show that for $\kappa=0$ one
always gets separable states (Proposition \ref{prop:2noises-k=0})  but, for $\kappa=1$ we get entangled states
for small coupling strength and small temperatures of both reservoirs (Proposition \ref{prop:2noises-k=1}).
In this case, the difference in temperatures does not seem as important as the other parameters.
A more detailed analysis of the case in which the temperatures are the same shows that a new intriguing
feature emerges. Theorem \ref{th:2noise-ent} shows that the Gaussian steady state induced on the bipartite system
is entangled in the following cases: at low temperature for any system-reservoir coupling constant $g\not=0$, at medium/low temperature
for small values of $g$, at medium/high temperature for small but not too small $g$.

We emphasize that, in both situations, external noises drive the system to an entangled stationary state even if its initial
state is separable because any initial state of the system and reservoirs converges towards a unique
invariant state by Theorem \ref{thm:esZCesZ-conv}. Moreover, our analysis of entanglement in terms of interaction
strength and reservoir temperatures is exahaustive; for all their values we determine whether the unique stationary state
is entangled or separable.

The paper is organized as follows. We introduce definitions and notation for Gaussian quantum Markov semigroups
(QMS) in Section \ref{sect:GQMS}. In Section \ref{sect:1noise} we analyze
the model with noise interacting only with one party. The case of two independent noises acting
on both parties is studied in Section \ref{sect:two-noises-model}.

\section{Gaussian states and quantum Markov semigroups}\label{sect:GQMS}

Let $\mathsf{h}=\Gamma(\mathbb{C}^d)$ be the $d$-mode Fock space on $\mathbb{C}^d$  and let
$(e_k)_{1\leq k\leq d}$ be the canonical orthonormal basis of $\mathbb{C}^d$.
The Hilbert space $\mathsf{h}$ is isometrically isomorphic to $\Gamma(\mathbb{C})\otimes\cdots\otimes\Gamma(\mathbb{C})$
and the set of exponential vectors $e_f$, also called coherent vectors, is total in $\mathsf{h}$.
Let $a_j, a_j^\dagger$ be the creation and annihilation operator of the Fock representation of the
$d$-dimensional Canonical Commutation Relations (CCR) defined on the set of exponential vectors by
\[
a_j e_f = \langle e_j, f\rangle e_f, \qquad
a_j^\dagger\,e_f = \frac{\mathrm{d}}{\mathrm{d}r} e_{f+re_j}\Big|_{r=0}.
\]
The above operators are obviously defined on the linear manifold spanned by exponential vectors
that turns out to be an essential domain for all the operators considered in this paper.
This is also the case for field operators
\begin{equation}\label{eq:field-op}
q_j = \left(a_j+a^\dagger_j\right)/\sqrt{2}\qquad p_j = \mi\left(a^\dagger_j-a_j\right)/\sqrt{2}.
\end{equation}

Unitary Weyl operators, are defined on the exponential vectors via the formula
\[
W(z)e_g=\hbox{\rm e}^{-\parallel z\parallel^2/2-\langle z,g \rangle}e_{z+g}\quad z,g\in\mathbb{C}^d.
\]
By this definition $\langle W(z)e_f,W(z)e_g\rangle = \langle e_f,e_g\rangle$ for all $f,g\in\mathbb{C}^d$,
therefore $W(z)$ extends uniquely to a unitary operator on $\mathsf{h}$. Weyl operators satisfy the CCR in the
exponential form, namely, for every $z,w \in \mathbb{C}^d$,
\begin{equation} \label{eq:WeylCCR}
	W(z)W(w) = \me^{-\mi \Im \langle{z},{w}\rangle} W(z+w).
\end{equation}

\begin{definition}
A density matrix $\rho$ is called a quantum Gaussian state if there exist
$\mu \in \mathbb{C}^d$ and a \emph{real} linear, symmetric, invertible operator $S$ such that
	\begin{equation} \label{funz-caratt-rho}
		\hat{\rho}(z) = \tr\left(\rho W(z)\right)= \exp\left( - \frac{1}{2} \Re\left\langle z,{Sz}\right\rangle
              - \mi \Re\left\langle \mu,z\right\rangle  \right),
            \quad \forall z \in \mathbb{C}^d.
	\end{equation}
	In that case $\mu$ is said to be the  mean vector  and $S$ the covariance operator  and we will denote it also with $\rho_{(\mu,S)}$.
\end{definition}

Here $\Re$ and $\Im$ denote the real and imaginary parts of complex numbers, vectors and matrices.
It is useful to identify the real linear operator $S$ on $\mathbb{C}^d$ with a matrix $\mathbf{S}$ with real entries acting on
$\mathbb{R}^{2d}$ and complex vectors $z=x+\mi y\in\mathbb{C}^d$  with $x,y\in\mathbb{R}^d$ with a vector $\mathbf{z}$ in
$\mathbb{R}^{2d}$. More precisely, $\mathbf{z}=[x,y]^{\rm\scriptstyle T}$
(${\rm\scriptstyle T}$ denotes the transpose) and, since $S z = S_1 z+ S_2 \overline{z}$ with $S_1,S_2$ complex linear,
for every $z \in \mathbb{C}$, we write	
\[
		\mathbf{S} \mathbf{z}=
		\left[\begin{array}{cc}
			\Re{S_1} + \Re{S_2} & \Im{S_2}-\Im{S_1} \\
			\Im{S_1}+ \Im{S_2} & \Re{S_1} - \Re{S_2}
		\end{array}\right]
		\left[\begin{array}{c}
		x \\ y
		\end{array}\right]
	\]
Conversely, given $2d\times 2d$ matrix with real entries
\[
	\mathbf{S}= \left[\begin{array}{cc}
		S_{11} & S_{12} \\
		S_{21} & S_{22}
	\end{array}\right],
\]
we can induce a real linear operator $S$ on $\mathbb{C}^d$ via
\[
	S z = \left( (S_{11}+S_{22})/{2} -  (S_{21}-S_{12})/({2\mi})\right) z
+ \left( ({S_{11}-S_{22}})/{2} - (S_{12}+S_{21})/({2\mi})\right) \overline{z}.
\]
It is also useful to introduce a real linear operator $J$ corresponding to the multiplication by $-\mi$
and note that
\[
	Jz= -\mi z, \quad \mathbf{J}=\left[\begin{array}{cc} 0 & \unit_d \\ -\unit_d & 0 \end{array}\right].
\]
where $\unit_d$ is the $d\times d$ identity matrix.

\begin{remark}{\rm
	Recall that a real linear, symmetric, invertible operator $S$ is a suitable covariance operator
of a Gaussian state if and only if
	\begin{equation} \label{eqn:HeisenbergCond}
		\mathbf{S} - \mi \mathbf{J} \geq 0,
	\end{equation}
where $\mathbf{S}$ and $\mathbf{J}$ are now considered as complex linear operator on $\mathbb{C}^{2d}$ (see \cite{KRP-cosa}
Theorem 3.1, with a little warning: $J$ there is the multiplication by $\mi$ instead of $-\mi$).
Therefore positivity is evaluated with respect to the usual complex inner product.}
\end{remark}

With the above notations, the covariance matrix of a zero-mean quantum Gaussian state is related to second order moments
by the following identities
\[
2\operatorname{tr}(\rho\, p_j\, p_k) = \mathbf{S}_{jk},  \quad
 2\operatorname{tr}\operatorname(\rho\,  q_j\, q_k) = \mathbf{S}_{j+d\,k+d}
\qquad
\operatorname{tr}\operatorname(\rho\, \{p_j , q_k\})  = -\mathbf{S}_{j\,k+d}
\]
where $j,k=1,\dots, d $ and $\{\cdot,\cdot\}$ denotes the anti-commutator.

A simple necessary and sufficient condition for entanglement of bipartite Gaussian state is known \cite{WeWo} Proposition 2
that translates at the level of density matrices the partial transpose condition.

If  $R_\alpha$, $\alpha=1,\cdots,2d$, are the canonical operators (position and momentum), the commutation relations are
\[
[R_\alpha,R_\beta]=\mi J_{\alpha \beta}\unit_{2d},
\]
with the symplectic matrix ${\mathbf J}=(J_{\alpha\beta})=\left[\begin{matrix}0&\unit_d\\-\unit_d&0\end{matrix}\right]$.
Suppose now the system consists of two parties $A$ and $B$ of sizes $d_A$ and $d_B$, Alice and Bob  respectively to say, with $d_A+d_B=d$.
A product state of two Gaussian states on $A$ and $B$ with covariance matrics ${\mathbf S}_A$ and ${\mathbf S}_B$ is again a Gaussian state with covariance the direct sum ${\mathbf S}={\mathbf S}_A\oplus {\mathbf S}_B$. Moreover, it was shown in \cite{WeWo} that a Gaussian state with covariance ${\mathbf S}$ of a bipartite system consisting of parties $A$ and $B$ is separable, i.e., unentangled between $A$ and $B$ if and only if there are covariance matrices ${\mathbf S}_A$ and ${\mathbf S}_B$ such that ${\mathbf S}\ge {\mathbf S}_A\oplus {\mathbf S}_B$.

As was shown in \cite[Proposition 2]{WeWo}, a Gaussian state with a covariance $\mathbf S$ has positive partial trace if
${\mathbf S}+\mi\tilde {\mathbf J}\ge 0$,
where ${\tilde{\mathbf J}}=\left[\begin{matrix} -{\mathbf J}_A&0\\0&{\mathbf J}_B\end{matrix}\right]$.
In other words, once this fails we realize an entangled Gaussian state.
Even more explicitly, in the case where $d=2$ of our interest, we have the following.

\begin{theorem}\label{th:bipartite-gauss-ent}
Assume  $d=2$. A bipartite Gaussian state with covariance matrix $\mathbf{S}$ is separable if and only if the matrix
\begin{equation}\label{eq:S-tilde}
\widetilde{\mathbf{S}} = \mathbf{S} +
\left[\begin{array}{cccc}  0 & 0 & \mathrm{i} & 0 \\
                           0 & 0 &  0 & -\mathrm{i} \\
                           -\mathrm{i} & 0 & 0 & 0 \\
                           0 & \mathrm{i} & 0 & 0 \end{array}  \right]
\end{equation}
is positive semidefinite.
\end{theorem}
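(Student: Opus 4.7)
The plan is to specialise Simon's PPT criterion (the $d_A=d_B=1$ case of the Werner--Wolf result already cited as \cite{WeWo} Proposition 2) to the canonical operator ordering $(p_1,p_2,q_1,q_2)$ used in this paper. What makes this more than a tautology is just bookkeeping: one has to check that the concrete matrix written in \eqref{eq:S-tilde} is exactly $\mathrm{i}\tilde{\mathbf{J}}$ in the present ordering.

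First, I would invoke that, for a bipartite Gaussian state with one mode on each side, positivity of the partial transpose is not only necessary but also sufficient for separability; this is Simon's theorem, and coincides with the general Werner--Wolf characterisation $\mathbf{S}\ge \mathbf{S}_A\oplus \mathbf{S}_B$ in this low-dimensional case. Combined with the condition $\mathbf{S}+\mathrm{i}\tilde{\mathbf{J}}\ge 0$ recalled in the paragraph preceding the theorem, this reduces the claim to identifying $\tilde{\mathbf{J}}$ in the paper's block convention (all momenta first, then all positions).

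Second, I would express partial transposition on party $B$ in phase-space terms. On the canonical operators it acts by $p_2\mapsto -p_2$ while leaving $p_1,q_1,q_2$ fixed, so in the ordering $(p_1,p_2,q_1,q_2)$ it is implemented by the diagonal involution $\mathbf{D}=\mathrm{diag}(1,-1,1,1)$. The modified symplectic form is then $\tilde{\mathbf{J}}=\mathbf{D}\mathbf{J}\mathbf{D}$, with $\mathbf{J}=\begin{bmatrix} 0 & \unit_2 \\ -\unit_2 & 0 \end{bmatrix}$ the symplectic matrix of the whole two-mode system.

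Third, I would carry out the $4\times 4$ multiplication $\mathrm{i}\,\mathbf{D}\mathbf{J}\mathbf{D}$ entry by entry and verify that it coincides with the matrix displayed in \eqref{eq:S-tilde}; the separability criterion then reads precisely $\widetilde{\mathbf{S}}\ge 0$, which is the statement of the theorem. There is no real obstacle: the only subtlety is reconciling the party-by-party block form $\tilde{\mathbf{J}}=\begin{bmatrix} -\mathbf{J}_A & 0 \\ 0 & \mathbf{J}_B \end{bmatrix}$ from the general Werner--Wolf statement with the paper's $(p_1,p_2,q_1,q_2)$ ordering, and this is handled by a single permutation conjugation, or, equivalently, by the direct $\mathbf{D}\mathbf{J}\mathbf{D}$ computation above.
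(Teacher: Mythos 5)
Your proposal is correct and follows essentially the same route as the paper, which presents Theorem \ref{th:bipartite-gauss-ent} as the explicit $d=2$ specialization of the Werner--Wolf/Simon PPT criterion $\mathbf{S}+\mathrm{i}\tilde{\mathbf{J}}\ge 0$ recalled in the preceding paragraph; your identification of the displayed matrix as $\mathrm{i}\,\mathbf{D}\mathbf{J}\mathbf{D}$ with $\mathbf{D}=\mathrm{diag}(1,-1,1,1)$ in the $(p_1,p_2,q_1,q_2)$ ordering checks out (and any residual sign ambiguity is immaterial since $\mathbf{S}\pm\mathrm{i}A\ge 0$ are equivalent for real symmetric $\mathbf{S}$ and real antisymmetric $A$).
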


We refer to \cite{ViWe,WeWo,WoEiPl} for results on entanglement of quantum Gaussian states and to
\cite{Raja-Tiju,CrDVMoRo} for further results, also in the infinite mode case.

A QMS $\mathcal{T}=(\mathcal{T}_t)_{t\geq 0}$ is a weakly$^*$-continuous semigroup of completely positive,
identity preserving, weakly$^*$-continuous maps on $\mathcal{B}(\mathsf{h})$.
The predual semigroup $\mathcal{T}_*= (\mathcal{T}_{*t})_{t\geq 0}$ on the predual space of trace class
operators on $\mathsf{h}$ is a strongly continuous contraction semigroup.

Gaussian QMSs (sometimes also called quasi-free semigroups \cite{DeVaVe}) can be introduced through their generator
which is unbounded but sufficiently well-behaved.
It can be represented in a generalized (since operators $L_\ell, H$ are unbounded) Gorini–Kossakowski–Lindblad-Sudarshan (GKLS)
form (see \cite{Po2022} Theorems 5.1, 5.2) as
\begin{equation}\label{eq:GKLS}
\mathcal{L}(x) = \mi\left[ H, x\right]
-\frac{1}{2}\sum_{\ell=1}^m \left( L_\ell ^*L_\ell\, x - 2 L_\ell^* x L_\ell + x\, L_\ell ^*L_\ell\right).
\end{equation}
where $1 \leq m \leq 2d$, and
	\begin{align}
		H&= \sum_{j,k=1}^d \left( \Omega_{jk} a_j^\dagger a_k + \frac{\kappa_{jk}}{2} a_j^\dagger a_k^\dagger + \frac{\overline{\kappa_{jk}}}{2} a_ja_k \right) + \sum_{j=1}^d \left( \frac{\zeta_j}{2}a_j^\dagger + \frac{\bar{\zeta_j}}{2} a_j \right), \label{eq:H}\\
		 L_\ell &= \sum_{k=1}^d \left( \overline{v_{\ell k}} a_k + u_{\ell k}a_k^\dagger\right),\label{eq:Lell}
	\end{align}
$\Omega:=(\Omega_{jk})_{1\leq j,k\leq d} = \Omega^*$ and $K:= (\kappa_{jk})_{1\leq j,k\leq d}= \kappa^{\hbox{\tiny T}}
\in M_d(\mathbb{C})$, are $d\times d$ complex matrices with $\Omega$ Hermitian and $K$ symmetric,
$V=(v_{\ell k})_{1\leq \ell\leq m, 1\leq  k\leq d}, U=(u_{\ell k})_{1\leq \ell\leq m, 1\leq  k\leq d} \in M_{m\times d}(\mathbb{C})$
are $m\times d$ matrices and $\zeta=(\zeta_j)_{1\leq j\leq d} \in \mathbb{C}^d$.

Note that operators $L_\ell$ are closable therefore we will identify them with their closure.

Clearly, $\mathcal{L}$ is well defined on the dense  sub-$^*$-algebra of $\mathcal{B}(\mathsf{h})$ generated by rank one
operators $|\xi\rangle\langle \xi'|$ with $\xi,\xi'\in D$ but the operators $H,L_\ell$ are unbounded and
the domain of $\mathcal{L}$ is not the whole of $\mathcal{B}(\mathsf{h})$. For this reason we associate
with each $x\in\mathcal{B}(\mathsf{h})$ a quadratic form on $\mathsf{h}$ with domain $D\times D$
\begin{eqnarray}\label{eq:Lform}
\texttt{\textrm{\pounds}}(x)  [\xi',\xi] &= & \mi\langle{H\xi'},{x\xi}\rangle - \mi\langle{\xi'},{x H\xi}\rangle \\
	&- & \frac{1}{2} \sum_{\ell=1}^m \left( \langle{\xi'},{xL_\ell^*L_\ell \xi}\rangle -2\langle{L_\ell \xi'},{xL_\ell \xi}\rangle
    + \langle{L_\ell^* L_\ell \xi'},{x \xi}\rangle \right) \nonumber
\end{eqnarray}
An application of the minimal semigroup method yields the following
\begin{theorem}\label{th:G-QMS!}
There exists a unique QMS, $\mathcal{T}=(\mathcal{T}_t)_{t\geq 0}$ such that, for all $x\in\mathcal{B}(\mathsf{h})$ and
$\xi,\xi'\in D$, the function $t\mapsto \langle{\xi'},{\mathcal{T}_t (x) \xi} \rangle $ is differentiable and
\begin{align*}
	\frac{\hbox{\rm d}}{\hbox{\rm d}t} \langle{\xi'},{\mathcal{T}_t (x) \xi}\rangle
= \texttt{\textrm{\pounds}}(\mathcal{T}_t(x)) [\xi',\xi]
 \qquad \forall\, t\geq 0.
\end{align*}
The domain of the generator consists of $x\in\mathcal{B}(\mathsf{h})$ for which the quadratic form
$\texttt{\textrm{\pounds}}(x)$ is represented by a bounded operator.
\end{theorem}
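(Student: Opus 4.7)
The plan is to invoke the minimal semigroup construction of Davies, as refined by Chebotarev and Fagnola, and specialize it to the Gaussian setting where $H$ is at most quadratic and each $L_\ell$ is linear in $a_j, a_j^\dagger$. The argument splits naturally into construction, form equation, and conservativity/uniqueness.

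\textbf{Step 1: the auxiliary semigroup.} I would set $G = -\mi H - \tfrac12\sum_{\ell=1}^m L_\ell^* L_\ell$ on the exponential-vector domain $D$. A direct computation on coherent vectors gives $2\Re\langle \xi, G\xi\rangle = -\sum_\ell \|L_\ell \xi\|^2 \leq 0$, so $G$ is dissipative; explicit action on $e_f$ using the polynomial structure of $H$ and $L_\ell$ shows that $G$ is closable and that its closure generates a strongly continuous contraction semigroup $(P_t)_{t\geq 0}$ on $\mathsf{h}$, with $D$ a core. Such estimates are standard for Gaussian generators (see \cite{Po2022}).

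\textbf{Step 2: the minimal QMS.} Following the Davies construction, I would define iteratively, for $x\geq 0$, $\mathcal{T}^{(0)}_t(x) = P_t^* x P_t$ and
\[
\mathcal{T}^{(n+1)}_t(x) = P_t^* x P_t + \sum_{\ell=1}^m \int_0^t P_{t-s}^* L_\ell^* \, \mathcal{T}^{(n)}_s(x) \, L_\ell P_{t-s}\,\md s,
\]
and verify that the sequence is increasing and bounded by $\|x\|\unit$. The pointwise weak$^*$ limit $\mathcal{T}^{\min}_t(x)$ yields a sub-Markov, completely positive, weak$^*$-continuous semigroup on $\mathcal{B}(\mathsf{h})$. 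Differentiating the integral equation in the form sense, together with the mapping properties $L_\ell P_t \xi \in D$ and $P_t D\subset D$, shows that $\mathcal{T}^{\min}$ satisfies the form equation driven by $\texttt{\textrm{\pounds}}$ on $D\times D$.

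\textbf{Step 3: conservativity and uniqueness.} The delicate point is to show $\mathcal{T}^{\min}_t(\unit) = \unit$, which then forces uniqueness among QMS solutions of the form equation. I would apply the Chebotarev--Fagnola criterion with reference operator $C = \unit + N$, where $N = \sum_{j=1}^d a_j^\dagger a_j$. Because $H$ is quadratic and each $L_\ell$ is linear in creation and annihilation operators, the commutators $[H,N]$ and $[L_\ell, N]$ are of degree at most two, and an elementary computation on exponential vectors yields a form inequality
\[
\texttt{\textrm{\pounds}}(N)[\xi,\xi] \leq b\,\langle \xi, N\xi\rangle + c\,\|\xi\|^2, \qquad \xi\in D,
\]
for suitable constants $b,c\geq 0$. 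Since $N$ is essentially self-adjoint on $D$ and $D$ is a core for $P_t$, the Chebotarev--Fagnola conditions are met, giving both conservativity and the uniqueness of the solution to the form equation; the domain characterization of the generator then follows from general semigroup theory.

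The main obstacle is Step 3: one has to ensure that the purely formal identity $\mathcal{L}(N) = b N + c$ on $D$ really controls the growth at infinity rigorously. This rigor is what the Chebotarev--Fagnola machinery provides, but one must carefully verify its hypotheses — in particular that the form estimate extends to a genuine operator inequality after regularisation — which is where the explicit polynomial structure of Gaussian generators is essential.
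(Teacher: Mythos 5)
Your proposal is correct and follows exactly the route the paper itself takes: the theorem is presented as "an application of the minimal semigroup method," i.e., the Davies/Chebotarev--Fagnola construction with the auxiliary contraction semigroup generated by $G=-\mi H-\frac12\sum_\ell L_\ell^*L_\ell$ on the exponential-vector domain, followed by a conservativity check using the number operator as reference operator, which is precisely what is carried out in the cited literature (\cite{Po2022}). Your Step 3 correctly identifies the only delicate point (upgrading the form estimate $\texttt{\textrm{\pounds}}(N)[\xi,\xi]\leq b\langle\xi,N\xi\rangle+c\|\xi\|^2$ to the rigorous hypotheses of the criterion), so there is nothing to add beyond the paper's own sketch.
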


Weyl operators, in general, do not belong to the domain of the generator of $\mathcal{T}$ because a straightforward
computation shows that the quadratic form $\texttt{\textrm{\pounds}}(W(z))$ is typically unbounded.
However, one has the following explicit formula (see \cite{DeVaVe})
\begin{theorem}\label{th:explWeyl}
Let $(\mathcal{T}_{t})_{t\ge 0}$ be the quantum Markov semigroup with generalized GKLS  generator
associated with $H,L_\ell$ as above. For all Weyl operator $W(z)$ we have
\begin{equation}\label{eq:explWeyl}
\mathcal{T}_t(W(z))
= \exp\left(-\frac{1}{2}\int_0^t \Re\langle{\hbox{\rm e}^{sZ}z},{
  C \hbox{\rm e}^{sZ}z}\rangle\hbox{\rm d}s
+\mi\int_0^t  \Re\langle{\zeta},{\hbox{\rm e}^{sZ}z}\rangle \hbox{\rm d}s \right)
W\left(\hbox{\rm e}^{tZ}z\right)
\end{equation}
where the \emph{real linear} operators $Z,C$ on $\mathbb{C}^d$ are
\begin{eqnarray}
  Zz &=& \left[ \left(   U^{\rm\scriptstyle T}\overline{U} - V^{\rm\scriptstyle T}\overline{V}\right)/2
  + \mi \Omega \right] z + \left[\left( U^{\rm\scriptstyle T}V - V^{\rm\scriptstyle T} U\right)/2
  + \mi K \right]\overline{z}\\
   Cz &=& \left(  U^{\rm\scriptstyle T}\overline{U} + V^{\rm\scriptstyle T}\overline{V}\right) z
   + \left( U^{\rm\scriptstyle T}V + V^{\rm\scriptstyle T} U\right)\overline{z} \label{eq:bfC}
   \end{eqnarray}
\end{theorem}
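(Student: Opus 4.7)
The plan is to invoke the uniqueness part of Theorem~\ref{th:G-QMS!}: set $F_t(z)$ equal to the right-hand side of (\ref{eq:explWeyl}) and check that $F_0(z)=W(z)$ and that, for every $\xi,\xi'\in D$, the function $t\mapsto\langle\xi',F_t(z)\xi\rangle$ is differentiable with derivative $\texttt{\textrm{\pounds}}(F_t(z))[\xi',\xi]$. Uniqueness of the solution to the Cauchy problem in Theorem~\ref{th:G-QMS!} then identifies $F_t(z)$ with $\mathcal{T}_t(W(z))$.

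The preparatory computation is to identify $\texttt{\textrm{\pounds}}(W(w))$ as a form for arbitrary $w\in\mathbb{C}^d$. Differentiating $W(w)e_g = \me^{-\|w\|^2/2-\langle w,g\rangle}e_{w+g}$ in $g$ (equivalently, $W(w)^* a_j W(w) = a_j + w_j$) yields the shift relations $[a_j,W(w)] = w_j W(w)$ and $[a_j^\dagger,W(w)] = \overline{w}_j W(w)$. These give $[L_\ell,W(w)] = \lambda_\ell(w)\,W(w)$ with $\lambda_\ell(w) = \sum_k\bigl(\overline{v_{\ell k}}\,w_k + u_{\ell k}\,\overline{w}_k\bigr)$, and an analogous identity for $L_\ell^*$. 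A short expansion of the double-commutator structure in (\ref{eq:Lform}) then shows that the dissipator acting on $W(w)$ splits as a scalar contribution $-\tfrac12\sum_\ell|\lambda_\ell(w)|^2\,W(w)$ plus an operator-valued ``drift'' built from the partial commutators $[L_\ell^*,W(w)]L_\ell - L_\ell^*[L_\ell,W(w)]$; the Hamiltonian commutator $\mi[H,W(w)]$ contributes a scalar piece $\mi\Re\langle\zeta,w\rangle\,W(w)$ from the linear term of $H$ and an operator piece determined by $\Omega$ and $K$.

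The decisive identification is that the combined drift, acting on $W(w)$, coincides to first order with the generator of the flow $w\mapsto \me^{tZ}w$ with $Z$ exactly as in the statement, while the combined scalar part equals $-\tfrac12\Re\langle w,Cw\rangle + \mi\Re\langle\zeta,w\rangle$ with $C$ as in the statement. Granted this, differentiating $F_t(z)$ in $t$ by the chain rule — using the shift relations applied to $\frac{d}{dt}\me^{tZ}z = Z\me^{tZ}z$ to differentiate $W(\me^{tZ}z)$ — reproduces $\texttt{\textrm{\pounds}}(F_t(z))[\xi',\xi]$ exactly, closing the argument. The hard part will be precisely this last identification: $Z$ and $C$ are real-linear maps on $\mathbb{C}^d$ mixing a $w$-component (carrying $(U^{\rm T}\overline{U} - V^{\rm T}\overline{V})/2 + \mi\Omega$) with a $\overline{w}$-component (carrying $(U^{\rm T}V - V^{\rm T}U)/2 + \mi K$), so one must carefully track which half of each contribution produced by $|\lambda_\ell(w)|^2$ and by the $\Omega,K$ commutators ends up inside the scalar quadratic form $\Re\langle w,Cw\rangle$ and which inside the linear drift $Zw$. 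The symmetry $\kappa_{jk}=\kappa_{kj}$, the Hermiticity of $\Omega$, and the polarization identity for $\overline{\lambda_\ell(w)}\lambda_\ell(w)$ are what make this decomposition unique and consistent, and in practice it is cleanest to match the two sides after re-expressing every complex object in terms of the $\mathbb{R}^{2d}$-matrices $\mathbf{Z},\mathbf{C},\mathbf{J}$ introduced in Section~\ref{sect:GQMS}.
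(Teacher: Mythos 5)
The paper itself contains no proof of this theorem: formula \eqref{eq:explWeyl} is simply imported from \cite{DeVaVe} (see also \cite{Po2022}), so there is no in-text argument to compare yours against. Your plan is the standard verification strategy used in those references: compute the form $\texttt{\textrm{\pounds}}(W(w))$ from the shift relations $[a_j,W(w)]=w_jW(w)$, $[a_j^\dagger,W(w)]=\overline{w}_jW(w)$, split it into a scalar multiple of $W(w)$ plus a first-order drift, and match against the $t$-derivative of the right-hand side of \eqref{eq:explWeyl}. The pieces you make explicit are correct; in particular $[L_\ell,W(w)]=\lambda_\ell(w)W(w)$ with $\lambda(w)=\overline{V}w+U\overline{w}$, and one checks $\sum_\ell|\lambda_\ell(w)|^2=\Re\langle w,Cw\rangle$, which is exactly the quadratic term in the exponent.

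Two concrete gaps remain. First, the scalar bookkeeping as you state it would not close: the scalar part of $\texttt{\textrm{\pounds}}(W(w))$ is \emph{not} just $-\tfrac12\Re\langle w,Cw\rangle+\mi\Re\langle\zeta,w\rangle$. Normal-ordering the commutators with the quadratic part of $H$ (e.g.\ $[a_j^\dagger a_k,W(w)]=(w_ka_j^\dagger+\overline{w}_ja_k)W(w)-\overline{w}_jw_kW(w)$) and with the dissipator drift produces an additional \emph{purely imaginary} quadratic scalar, equal to $\mi\Im\langle Zw,w\rangle$. This term does not belong in the exponential prefactor of \eqref{eq:explWeyl}; it is matched instead by the phase arising when one differentiates $t\mapsto W(\me^{tZ}z)$ via the Weyl relation \eqref{eq:WeylCCR}, namely $\frac{\md}{\md t}W(w_t)=\bigl(a^\dagger(\dot w_t)-a(\dot w_t)+\mi\Im\langle\dot w_t,w_t\rangle\bigr)W(w_t)$ — a formula the shift relations alone do not give you and which your sketch omits. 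Second, the concluding appeal to ``uniqueness of the solution to the Cauchy problem'' is not what Theorem \ref{th:G-QMS!} asserts: that theorem gives uniqueness of the \emph{semigroup}, not of individual solutions $t\mapsto F_t$ of the form equation with initial datum $W(z)$ (which does not lie in the domain of the generator). The standard repair is a duality argument — show that $s\mapsto\operatorname{tr}\bigl(\mathcal{T}_{*(t-s)}(\rho)F_s(z)\bigr)$ is constant for $\rho$ in a dense set of nice states — which needs some domain care. Neither gap is fatal, but both must be filled for the argument to be a proof.
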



Let $Z^\sharp$ denote the adjoint of the real linear operator $Z$ with respect to the scalar product
$\Re\langle\cdot,\cdot\rangle$, namely $Z^\sharp z = Z_1^* z + Z_2^T \overline{z}$.
As an immediate consequence of Theorem \ref{th:explWeyl} one can show that the predual QMS of a Gaussian QMS
preserves Gaussian states
\begin{proposition} \label{prop:gaussianStateEvolution}
	If $\mathcal{T}$ is a Gaussian QMS, then $\mathcal{T}_{*t} (\rho_{(\mu, S)}) = \rho_{(\mu_t, S_t)}$ with
	\begin{equation} \label{eq:invariantParameterEvolution}
		\mu_t = \me^{tZ^\sharp}\mu - \int_0^t \me^{sZ^\sharp} \zeta \md s, \quad S_t
= \me^{tZ^\sharp} S \me^{tZ} + \int_0^t \me^{sZ^\sharp} C \me^{sZ} \md s.
	\end{equation}
\end{proposition}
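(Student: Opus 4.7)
The plan is to compute the quantum characteristic function of $\mathcal{T}_{*t}(\rho_{(\mu,S)})$ directly, using the predual relation
\[
\tr\!\left(\mathcal{T}_{*t}(\rho_{(\mu,S)})\,W(z)\right)
= \tr\!\left(\rho_{(\mu,S)}\,\mathcal{T}_t(W(z))\right),
\]
and read off the Gaussian parameters from the result. Since characteristic functions separate density matrices (the trace pairing with Weyl operators is faithful because the linear span of $\{W(z):z\in\mathbb{C}^d\}$ is weak$^*$-dense in $\mathcal{B}(\mathsf{h})$), identifying the right-hand side as a Gaussian characteristic function will suffice.

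First I would substitute the explicit formula \eqref{eq:explWeyl} of Theorem \ref{th:explWeyl} into the right-hand side, and then use the Gaussian characteristic function \eqref{funz-caratt-rho} with argument $\me^{tZ}z$:
\begin{align*}
\tr\!\left(\rho_{(\mu,S)}\mathcal{T}_t(W(z))\right)
= \exp\!\Bigl(
& -\tfrac{1}{2}\textstyle\int_0^t\Re\langle \me^{sZ}z,C\me^{sZ}z\rangle\md s
+ \mi\int_0^t\Re\langle \zeta,\me^{sZ}z\rangle\md s \\
& -\tfrac{1}{2}\Re\langle \me^{tZ}z,S\me^{tZ}z\rangle
- \mi\Re\langle \mu,\me^{tZ}z\rangle
\Bigr).
\end{align*}
Next I would use the defining property of the adjoint $Z^\sharp$ with respect to $\Re\langle\cdot,\cdot\rangle$, namely $\Re\langle w,Zv\rangle=\Re\langle Z^\sharp w,v\rangle$, which exponentiates to $\Re\langle w,\me^{sZ}v\rangle=\Re\langle \me^{sZ^\sharp}w,v\rangle$. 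Applying this to each of the four terms moves all evolution operators onto the left factor of $z$:
\[
\Re\langle \me^{sZ}z,C\me^{sZ}z\rangle=\Re\langle z,\me^{sZ^\sharp}C\me^{sZ}z\rangle,\qquad
\Re\langle \zeta,\me^{sZ}z\rangle=\Re\langle \me^{sZ^\sharp}\zeta,z\rangle,
\]
and analogously for the $S$- and $\mu$-terms. Pulling the $z$ out of the quadratic form and the inner products, and interchanging $\int_0^t$ with $\Re\langle z,\cdot\rangle$, one collects exactly
\[
\exp\!\left(-\tfrac{1}{2}\Re\langle z,S_t z\rangle-\mi\Re\langle \mu_t,z\rangle\right),
\]
with $S_t$ and $\mu_t$ as in \eqref{eq:invariantParameterEvolution}. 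This is the characteristic function of $\rho_{(\mu_t,S_t)}$, provided $S_t$ is a valid covariance operator.

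The last point is where the only nontrivial issue lies: one must check that $S_t$ is real linear, symmetric (with respect to $\Re\langle\cdot,\cdot\rangle$), invertible, and satisfies the Heisenberg condition \eqref{eqn:HeisenbergCond}. Symmetry of $\me^{tZ^\sharp}S\me^{tZ}$ and of $\me^{sZ^\sharp}C\me^{sZ}$ is immediate from $(Z^\sharp)^\sharp=Z$ and from the symmetry of $S$ and $C$ (the latter being manifest from \eqref{eq:bfC}, since $U^{\mathrm T}\overline{U}+V^{\mathrm T}\overline{V}$ and $U^{\mathrm T}V+V^{\mathrm T}U$ are symmetric). Positivity and invertibility of $S_t$, however, need not be argued directly: since $\mathcal{T}_{*t}$ is a completely positive, trace preserving map, $\mathcal{T}_{*t}(\rho_{(\mu,S)})$ is automatically a density matrix, hence its characteristic function is that of a genuine state, and the Heisenberg condition on $S_t$ follows from this together with the matching of characteristic functions. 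The main obstacle, therefore, is purely notational: keeping straight the passage between complex, real-bilinear, and matrix viewpoints so that the adjoint identities for $Z^\sharp$ are applied unambiguously under the sign $\Re\langle\cdot,\cdot\rangle$.
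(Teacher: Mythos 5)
Your proposal is correct and is exactly the argument the paper intends: the proposition is stated as an ``immediate consequence of Theorem \ref{th:explWeyl}'', obtained by pairing $\rho_{(\mu,S)}$ with $\mathcal{T}_t(W(z))$ via the duality $\tr(\mathcal{T}_{*t}(\rho)W(z))=\tr(\rho\,\mathcal{T}_t(W(z)))$ and transporting the evolution onto the Gaussian parameters with $Z^\sharp$. Your additional remarks on injectivity of the characteristic function and on why $S_t$ is an admissible covariance are sound and only make explicit what the paper leaves tacit.
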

It can be shown (see \cite{Po2022} Theorems 5.1, 5.2) that a QMS  $\mathcal{T}$ is Gaussian if
maps $\mathcal{T}_{*t}$ of the predual semigroup $\mathcal{T}_*$ preserve Gaussian states.

For the purposes of this paper we need only consider zero-mean Gaussian states and Hamiltonian $H$ \eqref{eq:H}
with zero linear part, therefore, from now on, we assume $\mu=\zeta=0$. In this case, if the matrix $\mathbf{Z}$
is stable, namely its spectrum is contained in the left-half plane $\{\, \lambda\in\mathbb{C}\,\mid\, \Re\lambda < 0\,\}$,
it is clear from \eqref{eq:invariantParameterEvolution} that, the QMS $\mathcal{T}$ admits a Gaussian invariant
state with covariance matrix $\int_0^\infty \me^{sZ^\sharp} C \me^{sZ} \md s$.

Somewhat complex but elementary calculations show that the complex linear operators on $\mathbb{C}^{2d}$ determined
by $Z$ and $C$ are
\begin{eqnarray}
\mathbf{Z} & = & \frac{1}{2}\left[\begin{array}{cc}
			\Re \left(\left( U - \overline{V}\right)^*\left(U +\overline{V}\right)\right)
            & \Im\left(\left( U - \overline{V}\right)^*\left(U -\overline{V}\right)\right) \\
			- \Im \left(\left( U + \overline{V}\right)^*\left(U +\overline{V}\right)\right)
            & \Re \left(\left( U + \overline{V}\right)^*\left(U -\overline{V}\right)\right)		\\
           \end{array}\right] \nonumber \\
 &  & + \left[\begin{array}{cc}
			-\Im \left(\Omega + K\right) & \Re\left(K - \Omega \right) \\
			\Re\left(\Omega + K\right) & \Im\left(K - \Omega\right)
		\end{array}\right] \label{eq:mathbfZ} \\
	\mathbf{C} & = & \left[\begin{array}{cc}
			\Re \left( \left(U + \overline{V}\right)^*\left(U + \overline{V}\right)\right)
         & \Im \left( \left(U + \overline{V}\right)^*\left(U - \overline{V}\right)\right)\\
			-\Im \left( \left(U - \overline{V}\right)^*\left(U + \overline{V}\right)\right)
         & \Re \left( \left(U - \overline{V}\right)^*\left(U - \overline{V}\right)\right)
		\end{array}\right] \label{eq:mathbfC}
\end{eqnarray}

It is known that Gaussian QMSs with a stable matrix $\mathbf{Z}$ are, in turn, stable in the sense that each
initial state converges towards a unique invariant state. More precisely, by \cite{FP-IDAQP24} Theorem 9, we have
\begin{theorem}\label{thm:esZCesZ-conv}
If $\mathbf{Z}$ is stable, the QMS $\mathcal{T}$ has a unique zero-mean Gaussian invariant state with covariance matrix
\begin{equation}\label{eq:cov-mean}
\mathbf{S}=\int_0^\infty \me^{s\mathbf{Z}^*}\mathbf{C}\me^{s\mathbf{Z}} \md s
\end{equation}
Moreover, any initial state converges in trace norm towards this Gaussian invariant state. In particular, the Gaussian invariant state is also the unique normal invariant state.
\end{theorem}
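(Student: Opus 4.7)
The plan is to prove the three assertions of the theorem---existence of a zero-mean Gaussian invariant state with the covariance in \eqref{eq:cov-mean}, trace-norm convergence of every initial state, and uniqueness among normal invariant states---by combining Proposition \ref{prop:gaussianStateEvolution} with the explicit Weyl formula in Theorem \ref{th:explWeyl}.

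For existence I would first note that stability of $\mathbf{Z}$ forces absolute convergence of the integral defining $\mathbf{S}$ in \eqref{eq:cov-mean}, and a direct differentiation shows that $\mathbf{S}$ solves the Lyapunov equation $\mathbf{Z}^{*}\mathbf{S}+\mathbf{S}\mathbf{Z}+\mathbf{C}=0$. To check that $\mathbf{S}$ qualifies as a covariance matrix I would pick any auxiliary covariance $S_{0}$ coming from a genuine Gaussian state, evolve it through \eqref{eq:invariantParameterEvolution}, and observe that $S_{t}-\mi\mathbf{J}\ge 0$ for every $t\ge 0$; the term $\me^{tZ^{\sharp}}S_{0}\me^{tZ}$ vanishes in norm by stability of $\mathbf{Z}$, while the integral tends to $\mathbf{S}$, so $\mathbf{S}-\mi\mathbf{J}\ge 0$ passes to the limit. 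Plugging $\mathbf{S}$ back into \eqref{eq:invariantParameterEvolution} gives $S_{t}=\mathbf{S}$ for all $t\ge 0$, hence $\rho_{(0,\mathbf{S})}$ is invariant.

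For convergence starting from an arbitrary state $\rho$, I would use the duality $\tr(\mathcal{T}_{*t}(\rho)W(z))=\tr(\rho\,\mathcal{T}_{t}(W(z)))$ together with Theorem \ref{th:explWeyl} applied with $\zeta=0$:
\begin{equation*}
\mathcal{T}_{t}(W(z))=\exp\!\left(-\tfrac{1}{2}\int_{0}^{t}\Re\langle\me^{sZ}z,C\me^{sZ}z\rangle\,\md s\right)W\!\left(\me^{tZ}z\right).
\end{equation*}
Stability of $\mathbf{Z}$ makes $\me^{tZ}z\to 0$, so $W(\me^{tZ}z)\to\unit$ in the weak operator topology, while the integral in the exponent converges to $\tfrac{1}{2}\Re\langle z,\mathbf{S}z\rangle$. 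Consequently $\tr(\mathcal{T}_{*t}(\rho)W(z))\to\exp(-\tfrac{1}{2}\Re\langle z,\mathbf{S}z\rangle)=\tr(\rho_{(0,\mathbf{S})}W(z))$ for every $z\in\mathbb{C}^{d}$, that is, the characteristic functions of $\mathcal{T}_{*t}(\rho)$ converge pointwise to that of the invariant state.

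The main obstacle is upgrading this pointwise convergence of characteristic functions to convergence in trace norm. Here I plan to invoke a quantum analogue of L\'evy's continuity theorem: since the span of Weyl operators is weak-$*$ dense in $\mathcal{B}(\mathsf{h})$, pointwise convergence of the characteristic functions together with the uniform bound $\|\mathcal{T}_{*t}(\rho)\|_{1}=1$ implies weak convergence $\mathcal{T}_{*t}(\rho)\to\rho_{(0,\mathbf{S})}$ in the trace class; the identity $\tr(\mathcal{T}_{*t}(\rho))=\tr(\rho_{(0,\mathbf{S})})=1$ then allows a noncommutative Scheff\'e/Gr\"umm argument to lift weak convergence to trace-norm convergence. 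Uniqueness among normal invariant states is automatic: any invariant $\sigma$ obeys $\sigma=\mathcal{T}_{*t}(\sigma)\to\rho_{(0,\mathbf{S})}$, forcing $\sigma=\rho_{(0,\mathbf{S})}$. The delicate point is precisely the L\'evy/Scheff\'e-type upgrade, for which the present paper defers to \cite{FP-IDAQP24}; every other ingredient reduces to finite-dimensional linear algebra and the integral representation of the covariance.
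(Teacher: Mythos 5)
Your proposal is correct and follows essentially the route the paper relies on: the paper offers no proof of its own here, deferring entirely to \cite{FP-IDAQP24} (Theorem 9), and your sketch---invariance and admissibility of $\mathbf{S}$ via the Lyapunov equation and the limit of \eqref{eq:invariantParameterEvolution}, pointwise convergence of quantum characteristic functions via Theorem \ref{th:explWeyl} and duality, then the upgrade to trace-norm convergence---is exactly how that cited result is established. The only nontrivial analytic ingredient, the quantum L\'evy-continuity/Gr\"umm step turning pointwise convergence of characteristic functions into trace-norm convergence, is precisely the content of the reference, and you correctly identify it as the point being outsourced.
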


The same conclusion can also be obtained from arguments based on irreducibility in \cite{AFPOSID22,FP-IDAQP22}
taking care of domain conditions on $G$.

In concrete models, the simplest way to calculate the covariance matrix $\mathbf{S}$ is by solving
the equation
\begin{equation}\label{eq:ZTS+SZ+C=0}
\mathbf{Z}^{\rm\scriptsize T} \mathbf{S} + \mathbf{S} \mathbf{Z} + \mathbf{C}=0
\end{equation}
In Appendix A we describe how to solve this equation in the cases of our interest.

\section{A single noise model}\label{sect:1noise}

In this section we consider a bipartite system in which only one of the parties interacts with an external noise.
More precisely, the system space is a two-mode Fock space with creation and annihilation operators $a^\dagger_1,a^\dagger_2$
and $a_1,a_2$. Another mode, that we label $0$, plays the role of a noise. Consider the 3-mode Fock space
$\Gamma(\mathbb{C}^3)$ and the Gaussian GKLS generator \eqref{eq:GKLS} with $m=2$ and
\begin{eqnarray}
 & & \qquad L_1 = \left(\frac{\mathrm{e}^{\beta}}{\mathrm{e}^{\beta}-1}\right)^{1/2} a_0
 \qquad
  L_2 = \left(\frac{1}{\mathrm{e}^{\beta}-1}\right)^{1/2} a^\dagger_0  \label{eq:1noiseL1L2}  \\
& & \kern-12truept H = \frac{g}{2}\left(a_0a^\dagger_{1}+a^\dagger_0 a_{1}\right) +
\frac{\omega}{2} \left(a_1 a^\dagger_{2}+a^\dagger_1 a_{2}\right)
+\frac{\kappa}{2}\sum_{j=1}^2\left(a_j^{2}+a^{\dagger 2}_j \right) \label{eq:1noiseH}
 \end{eqnarray}
where $g,\omega$ are non-zero constants so that the interaction between parties 0-1 and 1-2 is non-zero.
The constant $g$ is the coupling strength and $\beta>0$ the inverse temperature of the reservoir.
Note that, with a transformation $a_1\to \mi a_1,\, a^\dagger_1 \to -\mi a^\dagger_1$ and $a_2\to -\mi a_2,\, a^\dagger_2 \to \mi a^\dagger_2$,
if necessary, we can always assume $\omega>0$ and get $\omega=1$ by a time change $t\to \omega t$. Therefore, from now on, we assume $\omega=1$.

In this case, first one immediately identifies matrices  $U,V,\Omega,K$. Then, by \eqref{eq:mathbfZ} and \eqref{eq:mathbfC},
setting $\widetilde{\beta}=\coth(\beta/2)$, matrices $\mathbf{Z}$ and $\mathbf{C}$ of the Gaussian QMS of this model are
\begin{equation}\label{eq:1noise-Z-C}
\mathbf{Z} = \frac{1}{2}\left[ \begin{array}{cccccc}
                      -1 & 0 & 0   & 0 & -g & 0 \\
                      0 & 0 & 0   & -g & \kappa & -1  \\
                      0 & 0 & 0   & 0 & -1 & \kappa  \\
                     0 & g & 0  & -1 & 0 & 0  \\
                     g & \kappa & 1  & 0 & 0 & 0  \\
                     0 & 1 & \kappa  & 0 & 0 & 0  \\
                    \end{array}\right]
\quad
\mathbf{C} =\left[ \begin{array}{cccccc}
                     \widetilde{\beta} & 0 & 0   & 0 & 0 & 0 \\
                      0 & 0 & 0   & 0 & 0 & 0  \\
                      0 & 0 & 0   & 0 & 0 & 0  \\
                     0 & 0 & 0  & \widetilde{\beta} & 0 & 0  \\
                      0 & 0 & 0   & 0 & 0 & 0  \\
                      0 & 0 & 0   & 0 & 0 & 0  \\
                    \end{array}\right]
\end{equation}
Note that $\beta>0$ implies $\widetilde{\beta}>1$. Moreover, the function
$]0,+\infty[\ni t \mapsto \coth(1/(2t))$ monotonically maps $]0,+\infty[$ to $]1,+\infty[$ and
so it is an increasing function of the temperature.
The characteristic polynomial of $\mathbf{Z}$ is the product of two factors
\begin{eqnarray*}
& &
\lambda ^3  +  \frac{\lambda ^2}{2} + \frac{1-\kappa^2+g^2}{4}\lambda + \frac{1-\kappa^2- g^2 \kappa}{8} \\
& & \lambda ^3  +  \frac{\lambda ^2}{2} + \frac{1-\kappa^2+g^2}{4}\lambda + \frac{1-\kappa^2+ g^2 \kappa}{8}
\end{eqnarray*}
By the Routh - Hurwitz criterion (see, for example, \cite{Math24})
a third order polynomial $ c_ 0 \lambda^3 + c_ 1 \lambda^2 + c_ 2 \lambda + c_3$ is stable if and only if
$c_0, c_1, c_3 > 0$ and $c_1 c_2 - c_0 c_3 > 0$. In this case the latter condition becomes
\begin{eqnarray*}
 (1+g^2 -\kappa^2) - (1+g^2 \kappa -\kappa^2) & = &  g^2(1-\kappa) >0 \\
(1+ g^2 -\kappa^2) - (1-g^2 \kappa  -\kappa^2)& = &  g^2(1+\kappa) >0
\end{eqnarray*}
so that, since we assumed $g\not=0$, we find $0\leq |\kappa | < 1$.
Condition $c_3>0$ implies then $ g^2 |\kappa| < 1-\kappa^2$. Summarizing

\begin{lemma}\label{eq:1noise-stab-1}
Assume $g\not=0$. The matrix $\mathbf{Z}$ in \eqref{eq:1noise-Z-C} is stable if and only if
either $\kappa=0$ or
\begin{equation}\label{eq:1noise-stab-Z}
0< |\kappa | < 1 \quad\text{ and }\quad 0 < g^2  < (1-\kappa^2)/|\kappa|.
\end{equation}
\end{lemma}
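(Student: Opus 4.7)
The plan is to apply the Routh--Hurwitz criterion to each of the two cubic factors of the characteristic polynomial of $\mathbf{Z}$ displayed just above the lemma. Stability of $\mathbf{Z}$ is equivalent to stability of both cubic factors, so I would check the four Routh--Hurwitz inequalities for each one. Since in both cases $c_0 = 1 > 0$ and $c_1 = 1/2 > 0$ hold trivially, only the conditions $c_3 > 0$ and $c_1 c_2 - c_0 c_3 > 0$ need genuine attention.

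Next I would extract the content of these two conditions explicitly. The computation preceding the statement yields
\[
c_1 c_2 - c_0 c_3 \;=\; \frac{g^{2}(1 \pm \kappa)}{8},
\]
with the sign depending on which of the two cubic factors is considered; under the standing assumption $g \neq 0$, positivity of both quantities is equivalent to $|\kappa| < 1$. The constant coefficients of the two cubics read $c_3 = (1-\kappa^{2} \mp g^{2}\kappa)/8$, so the pair of inequalities $c_3 > 0$ combines into the single requirement $1 - \kappa^{2} > g^{2}|\kappa|$.

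Finally I would split on $\kappa$. If $\kappa = 0$, the inequality $1 - \kappa^{2} > g^{2}|\kappa|$ reduces to $1 > 0$ and is automatic, while $|\kappa| < 1$ is also trivial, so every $g \neq 0$ produces stability. If $\kappa \neq 0$, dividing $1 - \kappa^{2} > g^{2}|\kappa|$ by $|\kappa|$ gives $g^{2} < (1-\kappa^{2})/|\kappa|$; coupling this with $|\kappa| < 1$ reproduces exactly the bound in \eqref{eq:1noise-stab-Z}.

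The main work has already been carried out in the narrative preceding the statement, so I do not anticipate any serious obstacle. The only subtlety is careful bookkeeping of the two sign branches coming from the two cubic factors, together with the separate treatment of $\kappa = 0$, where the $c_3 > 0$ requirement does not contribute any further constraint on $g$.
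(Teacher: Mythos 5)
Your proposal is correct and follows essentially the same route as the paper: factor the characteristic polynomial into the two cubics, apply the Routh--Hurwitz criterion to each, observe that $c_1 c_2 - c_0 c_3 = g^2(1\pm\kappa)/8 > 0$ forces $|\kappa|<1$ when $g\neq 0$, and that the two conditions $c_3>0$ combine into $1-\kappa^2 > g^2|\kappa|$, which yields the stated bound after the $\kappa=0$ versus $\kappa\neq 0$ split. No gaps.
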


The above lemma shows that, for $\mathbf{Z}$ to be stable, we need $|\kappa|<1$ and the interaction strength has to
vanish as $|\kappa|$ tends to $1$.

By Theorem \ref{thm:esZCesZ-conv}, solving the equation \eqref{eq:ZTS+SZ+C=0}, we have the following

\begin{proposition}\label{prop:1noise-inv-state}
If $\kappa=0$ or the inequalities \eqref{eq:1noise-stab-Z} hold the Gaussian QMS generated by \eqref{eq:GKLS}
with $L_1,L_2, H$ given by \eqref{eq:1noiseL1L2}, \eqref{eq:1noiseH} has a unique invariant state which is a
quantum Gaussian state with covariance matrix $\mathbf{S}$
equal to $\widetilde{\beta}/((1-\kappa^2)^3-\kappa^2(1-\kappa^2) g^4)$ times
\[
\left[ \begin{array}{cccccc}
{\scriptstyle \delta^3 +g^2\kappa^2(\delta-g^2) } & {\scriptstyle g \kappa^3 (\delta-g^2 ) }
& {\scriptstyle g \kappa^2 (\delta-g^2 )  }
 & {\scriptstyle -g^2 \kappa^3 (\delta-g^2 ) } & {\scriptstyle g k^2 (\delta-g^2\kappa^2 )  }
 & {\scriptstyle -g \kappa (\delta - g^2 \kappa^2) } \\
 {\scriptstyle g \kappa^3 (\delta-g^2 ) } & {\scriptstyle \delta- g^2\kappa^2(g^2+\kappa^2) }
 & {\scriptstyle \kappa (\delta - g^2 \kappa^2) } & {\scriptstyle -g\kappa^2 (\delta - g^2 \kappa^2) }
 & {\scriptstyle \kappa (\delta - g^2 \kappa^2 (g^2+ \kappa^2)) } & {\scriptstyle -\kappa^2 (\delta - g^2 ) } \\
 {\scriptstyle g \kappa^2 (\delta - g^2) } & {\scriptstyle \kappa (\delta - g^2 \kappa^2)}
 &  {\scriptstyle \delta - g^2 \kappa^2} &  {\scriptstyle -g \kappa (\delta - g^2 \kappa^2)}
 &  {\scriptstyle \kappa^2 (1 - \kappa^2-g^2)} &  {\scriptstyle -\kappa (1- \kappa^2-g^2)} \\
  {\scriptstyle -g^2 \kappa^3 (\delta - g^2)} & {\scriptstyle -g\kappa^2 (\delta - g^2 k^2)}
 & {\scriptstyle -g \kappa (\delta - g^2 \kappa^2)} & {\scriptstyle \delta^3 +g^2\kappa^2(\delta-g^2)}
 & {\scriptstyle -g \kappa^3 (\delta - g^2)} & {\scriptstyle g \kappa^2 (\delta - g^2)} \\
 {\scriptstyle g\kappa^2 (\delta - g^2 )} & {\scriptstyle \kappa (\delta - g^2  \kappa^2( \kappa^2 + g^2))}
 & {\scriptstyle \kappa^2 (\delta - g^2)} & {\scriptstyle -g \kappa^3 (\delta-g^2)}
 & {\scriptstyle \delta- g^2\kappa^2(g^2+\kappa^2)} & {\scriptstyle -\kappa (\delta - g^2 \kappa^2)} \\
 {\scriptstyle -g \kappa (\delta - g^2 k^2) } &  {\scriptstyle -\kappa^2 (\delta - g^2) }
 &  {\scriptstyle -\kappa (\delta - g^2) } &  {\scriptstyle g \kappa^2 (\delta- g^2) }
 &  {\scriptstyle -\kappa (\delta - g^2 \kappa^2) } &  {\scriptstyle \delta - g^2 \kappa^2 } \\
\end{array} \right]
\]
where $\delta=1-\kappa^2$.
\end{proposition}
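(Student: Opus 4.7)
The plan is to invoke Theorem \ref{thm:esZCesZ-conv} to reduce the statement to solving the Lyapunov equation \eqref{eq:ZTS+SZ+C=0}, and then to extract its closed form by exploiting the sparse block structure of $\mathbf{Z}$ and $\mathbf{C}$.

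First I would verify that the hypotheses of Theorem \ref{thm:esZCesZ-conv} are met. Under the standing assumption ($\kappa=0$ or \eqref{eq:1noise-stab-Z}), Lemma \ref{eq:1noise-stab-1} guarantees that $\mathbf{Z}$ is stable, so Theorem \ref{thm:esZCesZ-conv} applies and yields at once existence and uniqueness of a zero-mean Gaussian invariant state, convergence of every initial state to it in trace norm, and the characterization of its covariance matrix $\mathbf{S}$ as the unique solution of $\mathbf{Z}^{\rm T}\mathbf{S}+\mathbf{S}\mathbf{Z}+\mathbf{C}=0$. The remaining content of the proposition is just the explicit form of this solution.

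For that explicit form I would first factor out the temperature. Writing $\mathbf{C}=\widetilde{\beta}\,\mathbf{C}_0$, where $\mathbf{C}_0$ is the $\beta$-independent rank-two matrix whose only nonzero entries are $(\mathbf{C}_0)_{11}=(\mathbf{C}_0)_{44}=1$, linearity of the Lyapunov equation gives $\mathbf{S}=\widetilde{\beta}\,\mathbf{S}_0$ with $\mathbf{Z}^{\rm T}\mathbf{S}_0+\mathbf{S}_0\mathbf{Z}+\mathbf{C}_0=0$; this cleanly separates the thermal prefactor from the dependence on $\kappa$ and $g$. The common denominator $(1-\kappa^2)^3-\kappa^2(1-\kappa^2)g^4=(1-\kappa^2)\bigl[(1-\kappa^2)^2-\kappa^2 g^4\bigr]$ in the statement is transparent once one multiplies the two cubic factors of the characteristic polynomial of $\mathbf{Z}$ displayed just before Lemma \ref{eq:1noise-stab-1}: their constant terms multiply to $[(1-\kappa^2)^2-\kappa^2 g^4]/64$, so the announced denominator is $64(1-\kappa^2)\det\mathbf{Z}$, a natural object associated with Cramer's rule on this system. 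For the numerators I would follow the procedure of Appendix A: write out the 21 scalar equations coming from \eqref{eq:ZTS+SZ+C=0} for the 21 independent entries of the symmetric matrix $\mathbf{S}_0$, and exploit the sparsity of $\mathbf{Z}$ (only a few nonzero entries per row, with a clear $3\times 3$ block structure separating the reservoir mode $0$ from the system modes $1,2$) together with the sparsity of $\mathbf{C}_0$ to peel off the system blockwise and solve by back-substitution. A short symbolic computation then produces each entry as a polynomial in $\kappa$ and $g$ of moderate degree over the announced common denominator.

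\textbf{Main obstacle.} The difficulty is not conceptual but purely computational. The $21$-equation system is moderately large, the polynomials appearing in the numerators are of high degree in $\kappa$ and $g$, and sign or transcription errors propagate easily. Two inexpensive sanity checks help control this. The first is the symmetry of $\mathbf{S}_0$ itself (forced by symmetry of the Lyapunov equation and of $\mathbf{C}_0$) together with the discrete symmetries of the model, which translate into nontrivial relations among entries and are visible in the announced matrix. The second is the Heisenberg inequality \eqref{eqn:HeisenbergCond}, which must hold for $\mathbf{S}$ to be a bona fide covariance matrix throughout the stability region and thus provides an end-to-end consistency check on the final formula.
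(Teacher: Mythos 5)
Your proposal matches the paper's own argument: stability from Lemma \ref{eq:1noise-stab-1}, existence/uniqueness of the Gaussian invariant state from Theorem \ref{thm:esZCesZ-conv}, and the explicit covariance obtained by solving the Lyapunov equation \eqref{eq:ZTS+SZ+C=0} with the temperature factored out as in Appendix A. The additional sanity checks (the denominator as a multiple of $\det\mathbf{Z}$, symmetry, and the Heisenberg condition) are sensible but the route is essentially identical to the paper's.
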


\noindent{\bf Remark.} Note that, for $\kappa=0$, the covariance matrix of the Gaussian invariant state
is $\widetilde{\beta}$ times the identity matrix. Clearly, this is not entangled. Indeed, as expected, the invariant state
is the thermal state $(1-\mathrm{e}^{- \beta})\,\mathrm{e}^{- \beta (a^\dagger_0 a_0+a^\dagger_1 a_1+a^\dagger_2 a_2)}$.

\smallskip

Tracing out the noise, namely removing the first and fourth rows and columns, corresponding to the $0$ mode, the reduced invariant density $\mathbf{S}_{\text{\tiny red}}$  turns out to be $\widetilde{\beta}\left(1-\kappa^2\right)^{-1}$ 
$\cdot\left( (1-\kappa^2)^2- g^4\kappa^2\right)^{-1}$ times
\[
\left[
\begin{array}{cccc}
 {\scriptstyle 1-\kappa^2-g^2\kappa^2(g^2+\kappa^2)}
  &  {\scriptstyle \kappa    \left(1-\kappa^2-g^2\kappa^2 \right)}
  &  {\scriptstyle  \kappa(1-\kappa^2) - g^2\kappa^3\left(1+g^2\right)}
  &   {\scriptstyle -\kappa^2\left(1-\kappa^2-g^2\right)} \\
 {\scriptstyle \kappa    \left(1-\kappa^2-g^2\kappa^2  \right)}
  & {\scriptstyle   1-\kappa^2 -g^2\kappa^2 }
 & {\scriptstyle \kappa^2  \left(1 -\kappa^2 - g^2\right)}
 & {\scriptstyle - k\left(1-\kappa^2-g^2\right)}  \\
  {\scriptstyle \kappa(1-\kappa^2) - g^2\kappa^3\left(1+g^2\right)}
 & {\scriptstyle   \kappa^2\left(1 -\kappa^2 - g^2\right)}
 & {\scriptstyle  1-\kappa^2-g^2\kappa^2(g^2+\kappa^2)}
  & {\scriptstyle -\kappa \left(1-\kappa^2(1+g^2) \right)} \\
 {\scriptstyle  - \kappa^2    \left(1-\kappa^2-g^2\right)}
   & {\scriptstyle -  \kappa\left(1-\kappa^2-g^2\right)}
   & -{\scriptstyle \kappa  \left(1-\kappa^2(1+g^2) \right)}
   & {\scriptstyle   1-\kappa^2- g^2\kappa^2 } \\
\end{array}
\right]
\]
Note that, since $0<|\kappa|<1$ and $g^2<(1-\kappa^2)/|\kappa|$
\begin{eqnarray}
 1-\kappa^2 - g^2\kappa^2 & > & \left( 1-\kappa^2\right)\left( 1-|\kappa|\right) >0 \label{eq:cov-mat-pos-diag-11} \\
 1-\kappa^2-g^2\kappa^2(g^2+\kappa^2)
 & > & 1-\kappa^2-|\kappa|(1-\kappa^2)((1-\kappa^2)|\kappa|^{-1}+\kappa^2) \nonumber \\
 & = & \kappa^2\left( 1-\kappa^2\right)\left( 1-|\kappa|\right) >0 \label{eq:cov-mat-pos-diag-22}
\end{eqnarray}

The reduced state is entangled if and only if the matrix
\[
\widetilde{\mathbf{S}}_{\text{\tiny red}}=
\mathbf{S}_{\text{\tiny red}} +
\left[\begin{array}{cccc}  0 & 0 & \mathrm{i} & 0 \\
                           0 & 0 &  0 & -\mathrm{i} \\
                           -\mathrm{i} & 0 & 0 & 0 \\
                           0 & \mathrm{i} & 0 & 0 \end{array}  \right]
\]
is \emph{not} positive semidefinite.

Entanglement of a bipartite Gaussian state can be established by an inequality on a certain nonlinear
function of determinants of $2\times 2$ blocks and a trace of a product of these blocks (see \cite{SimonR}, inequality (17)).
However, checking this condition implies the verification of various inequalities and, in the end, it essentially tantamounts
to the direct verification of the positivity of the roots of the characteristic polynomial. Therefore we follow this path.
As a bonus, we also see that, in our models, $\widetilde{\mathbf{S}}_{\text{\tiny red}}$ is not positive semidefinite
if and only if its determinant is strictly negative. In other words, the determinant of $\widetilde{\mathbf{S}}_{\text{\tiny red}}$
is an entanglement witness of the Gaussian bipartite state.

The characteristic polynomial of the above matrix is
\begin{eqnarray*}
& & \lambda^4
-  \frac{2 \widetilde{\beta}  \left(2(1-\kappa^2)-g^2 k^2\left(1+\kappa^2+g^2\right)\right)}
{\left(1-\kappa^2\right) \left(\left(1-\kappa^2\right)^2- g^4\kappa^2\right)} \lambda ^3 \\
& + & \frac{
 \widetilde{\beta}^2 (6 - (2 + g^2)^2 k^2 + (-2 + g^4) k^4)-2 (1 - k^2)^2 (1 - (2 + g^4) k^2 + k^4) }
{(1 - \kappa^2)^2 (1 - (2 + g^4) \kappa^2 + \kappa^4)} \lambda^2 \\
& - & \frac{   \widetilde{\beta}\left(2  \widetilde{\beta}^2 \left(2-g^2 k^2\right)+2 \left(1-k^2\right) \left(g^2 k^4+\left(g^4+g^2+2\right)
   k^2-2\right)\right)}{\left(1-k^2\right)^2 \left(1-\left(g^4+2\right) k^2+k^4\right)}\lambda \\
& + &  \frac{ \widetilde{\beta}^2 \left(g^4 \left(\kappa^2-\kappa^4\right)+2 \left(\kappa^4-1\right)\right)
 + \widetilde{\beta}^4+\left(1-\kappa^2\right)^2 \left((1-\kappa^2)^2-g^4\kappa^2\right)}
 {\left(1-\kappa^2\right)^2 \left((1-\kappa^2)^2-g^4\kappa^2\right)}
\end{eqnarray*}

The $\lambda^3$ coefficient is negative because it is minus the trace of $\mathbf{S}_{\text{\tiny red}}$ which has
positive diagonal entries or by \eqref{eq:cov-mat-pos-diag-11} and \eqref{eq:cov-mat-pos-diag-22}.
The $\lambda^2$ and $\lambda$ coefficients are positive by $ \widetilde{\beta}\geq 1$ and Lemma \ref{lem:1noise-lambdaaq-pos}.
Finally the zero order term (determinant) is  equal to
\[
 \frac{ \widetilde{\beta}^4- (1-\kappa^2)\left(2 +2 \kappa^2 -g^4 \kappa^2\right)\widetilde{\beta}^2
 + \left(1-\kappa^2\right)^2\left((1-\kappa^2)^2-g^4\kappa^2\right)}
 {\left(1-\kappa^2\right)^2 \left((1-\kappa^2)^2-g^4\kappa^2\right)}
\]
The numerator, as a function of $\widetilde{\beta}^2$ is a parabola with axis
the vertical line with abscissa $(1 - \kappa^2) (-1 - \kappa^2 + g^4  \kappa^2/2)$ which is negative by
\[
-1 - \kappa^2 + g^4  \kappa^2/2 < -1 - \kappa^2 +(1-\kappa^2)^2/2 = -1/2 -2\kappa^2 + \kappa^4/2 < -1/2
\]
as $|\kappa|<1$. Therefore the determinant can be negative for some values of $\widetilde{\beta}>1$
if and only if its value at $1$ is negative namely
\[
\kappa^2\left(-4(1 - \kappa^2)^2 + g^4\kappa^2 (1 - \kappa^2) + \kappa^6\right) < 0
\]
This inequality yields another upper bound on $g^2$. Finally,
solving the second order inequality in $\widetilde{\beta}$ that corresponds to negativity of the
determinant of $\widetilde{\mathbf{S}}_{\text{\tiny red}}$, we prove the following
\begin{theorem}\label{th:bgk}
The following conclusions hold
\begin{itemize}
\item[1.] There exists $\widetilde{\beta}>1$ for which the Gaussian invariant state is entangled if and only if
$g\not=0$, $0<|\kappa|<1$ and
\[
g^2 < \min\left\{\frac{(1-\kappa^2)}{|\kappa|}, \frac{\sqrt{\max\{4(1 - \kappa^2)^2-\kappa^6,0\}}}{|\kappa|\sqrt{1 - \kappa^2}} \right\}
\]
\item[2.] For such $g,\kappa$  the Gaussian invariant state is entangled  if and
only if $1 < \widetilde{\beta}<\widetilde{\beta}_*$ with
\begin{eqnarray*}
\widetilde{\beta}_*
&  = & 1-\kappa^4 +\frac{1}{2} \left(|\kappa|\left(1-\kappa^2\right)
\sqrt{16-g^4\kappa^2 \left(4-g^4\right) }-g^4 \kappa^2 \left(1-k^2\right)\right) \\
& = & \left(1-\kappa^2\right)\left(1+\kappa^2 +\frac{1}{2}  |\kappa |
\sqrt{16-g^4\kappa^2 \left(4-g^4\right) }-\frac{g^4 \kappa^2}{2}  \right)
\end{eqnarray*}
\end{itemize}
\end{theorem}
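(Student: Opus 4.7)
The plan is to apply Theorem \ref{th:bipartite-gauss-ent} to translate entanglement of the reduced bipartite state into failure of positive semidefiniteness of the $4\times 4$ Hermitian matrix $\widetilde{\mathbf{S}}_{\text{\tiny red}}$. Its characteristic polynomial has already been computed in the paragraphs above the theorem, and the signs of its coefficients are partially assembled: under the stability hypotheses and $\tilde\beta>1$, the $\lambda^3$ coefficient equals $-\tr(\mathbf{S}_{\text{\tiny red}})<0$ by \eqref{eq:cov-mat-pos-diag-11}, \eqref{eq:cov-mat-pos-diag-22}, while the $\lambda^2$ and $\lambda^1$ coefficients are positive thanks to Lemma \ref{lem:1noise-lambdaaq-pos}. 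From this I would deduce the key structural observation that $\widetilde{\mathbf{S}}_{\text{\tiny red}}$ fails to be positive semidefinite precisely when its determinant is strictly negative: one direction is immediate from the sign of the product of eigenvalues, and the other is the ``determinant as entanglement witness'' observation flagged as a bonus in the excerpt.

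I expect this equivalence to be the main obstacle, because the sign pattern $+,-,+,+,\ge 0$ of a quartic does not in general rule out a pair of negative real roots with positive product. The model-specific inequalities packaged in Lemma \ref{lem:1noise-lambdaaq-pos} must be strong enough to control the symmetric functions of the four eigenvalues of $\widetilde{\mathbf{S}}_{\text{\tiny red}}$ and exclude this scenario; verifying that they indeed do so is the first delicate step.

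Once the reduction to $\det(\widetilde{\mathbf{S}}_{\text{\tiny red}})<0$ is established, the remainder is a scalar analysis. The explicit numerator of the determinant, viewed as a quadratic $P(u)$ in $u=\tilde\beta^2$ with positive leading coefficient, is strictly increasing on $[1,\infty)$ by the parabolic discussion sketched above (the vertex of $P$ lies below $u=1$ under the stability constraints). Consequently $P(u)<0$ for some $u>1$ iff $P(1)<0$, and simplifying $P(1)$ yields
\[
\kappa^2\bigl(-4(1-\kappa^2)^2+g^4\kappa^2(1-\kappa^2)+\kappa^6\bigr)<0.
\]
Intersecting this inequality in $g^2$ with the stability bound $g^2<(1-\kappa^2)/|\kappa|$ from Lemma \ref{eq:1noise-stab-1}, and excluding $\kappa=0$ (for which the invariant state is the separable thermal state noted in the remark), produces exactly the two-sided constraint on $g^2$ in item 1.

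For item 2, once $(g,\kappa)$ satisfy the conditions of item 1, the quadratic $P$ has two real roots $u_-<1<u_*$ and the set of $u>1$ with $P(u)<0$ is precisely $(1,u_*)$. I would apply the quadratic formula to $P(u)=0$, observing that the discriminant factors as $\kappa^2(1-\kappa^2)^2\bigl(16-g^4\kappa^2(4-g^4)\bigr)$, and take the larger root; pulling $(1-\kappa^2)$ out of the resulting expression yields the two equivalent forms of $\tilde\beta_*$ displayed in the theorem, completing the argument.
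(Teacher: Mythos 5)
Your route coincides with the paper's: sign analysis of the coefficients of the characteristic polynomial of $\widetilde{\mathbf{S}}_{\text{\tiny red}}$, reduction of non-positive-semidefiniteness to negativity of the determinant, and then the biquadratic analysis in $\widetilde{\beta}$. The second half of your argument — vertex of $P$ below $u=1$, hence the criterion $P(1)<0$ yielding the second upper bound on $g^2$, then the quadratic formula with discriminant $\kappa^2(1-\kappa^2)^2\bigl(16-g^4\kappa^2(4-g^4)\bigr)$ producing $\widetilde{\beta}_*$ as the larger root in $u=\widetilde{\beta}^2$ — is exactly the paper's computation and is correct. The genuine gap is precisely where you flag it: you do not prove the equivalence ``$\widetilde{\mathbf{S}}_{\text{\tiny red}}$ fails to be positive semidefinite if and only if $\det\widetilde{\mathbf{S}}_{\text{\tiny red}}<0$'', and invoking the ``determinant as entanglement witness'' remark for the hard direction is circular, since that remark \emph{is} the claim to be established. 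Moreover, the obstacle you describe is an artifact of a sign slip: the $\lambda$-coefficient is not positive. It equals $-\widetilde{\beta}$ times the quantity in \eqref{eq:1-noise-lambda-coeff}, divided by $(1-\kappa^2)^2\bigl((1-\kappa^2)^2-g^4\kappa^2\bigr)$; both of these are positive (the former by Lemma \ref{lem:1noise-lambdaaq-pos}, the latter by the stability bound $g^4\kappa^2<(1-\kappa^2)^2$), so the coefficient is strictly \emph{negative}. The paper's phrase ``the $\lambda^2$ and $\lambda$ coefficients are positive'' is loose on this point: what is positive is the bracketed quantity controlled by the lemma, not the $\lambda$-coefficient itself.

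With the correct signs the equivalence is immediate and the scenario of two negative eigenvalues with positive product cannot occur: the characteristic polynomial has the alternating form $p(\lambda)=\lambda^4-e_1\lambda^3+e_2\lambda^2-e_3\lambda+e_4$ with $e_1,e_2,e_3>0$, so if $e_4=\det\widetilde{\mathbf{S}}_{\text{\tiny red}}\geq 0$ then $p(-\lambda)=\lambda^4+e_1\lambda^3+e_2\lambda^2+e_3\lambda+e_4$ has no positive roots; since $\widetilde{\mathbf{S}}_{\text{\tiny red}}$ is Hermitian its eigenvalues are real, hence all nonnegative and the matrix is positive semidefinite. Conversely $e_4<0$ forces an odd number of negative eigenvalues. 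Inserting this short argument — i.e., correcting the sign of the $\lambda$-coefficient and applying Descartes' rule to $p(-\lambda)$ — closes the gap; no further control of the symmetric functions of the eigenvalues beyond Lemma \ref{lem:1noise-lambdaaq-pos} is required, and the rest of your proposal then reproduces the paper's proof.
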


Figure \ref{fig:g2lt-funct-k} shows that the bound $g^2<(1 - \kappa^2)/|\kappa|$ for stability of $\mathbf{Z}$ prevails over the other one
for $|\kappa|< \sqrt{\left(5-\sqrt{13}\right)/2}\approx 0.8349996$... Moreover, for $|\kappa|>0.834996...$ one cannot get an entangled state
for any $g$ since the above upper bound goes to $0$.
\begin{figure}[h]
\begin{center}
\includegraphics[width=0.8\textwidth]{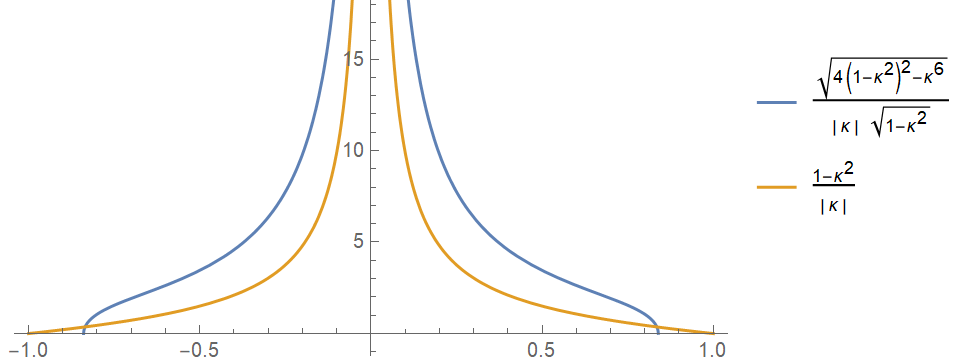}
\caption{Comparison of bounds on $g$ in terms of $\kappa$}\label{fig:g2lt-funct-k}
\end{center}
\end{figure}

\begin{corollary}\label{cor:1noise-ent}
If $0<|\kappa| < \sqrt{\left(5-\sqrt{13}\right)/2}$ one can always find a $g$ such that $g^2<(1 - \kappa^2)/|\kappa|$ and
$\widetilde{\beta} > 1$ for which the invariant state is entangled.
\end{corollary}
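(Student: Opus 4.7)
The plan is to deduce this directly from Theorem~\ref{th:bgk} by showing that, under the hypothesis $0 < |\kappa| < \sqrt{(5-\sqrt{13})/2}$, the stability bound $(1-\kappa^2)/|\kappa|$ is strictly smaller than the companion bound $\sqrt{\max\{4(1-\kappa^2)^2-\kappa^6,0\}}/(|\kappa|\sqrt{1-\kappa^2})$. Once this comparison is established, any $g$ with $0 < g^2 < (1-\kappa^2)/|\kappa|$ automatically satisfies the strict inequality required in Theorem~\ref{th:bgk} item~1, and item~2 then supplies a $\widetilde{\beta}>1$ for which the invariant state is entangled.

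The only real step is thus a single algebraic comparison. Both upper bounds are non-negative (pending the argument of the square root), so I would square them and multiply through by the positive factor $\kappa^2(1-\kappa^2)$, reducing the inequality to
$$(1-\kappa^2)^3 \leq 4(1-\kappa^2)^2 - \kappa^6.$$
Setting $u=\kappa^2 \in (0,1)$ and expanding, the $u^3$ terms cancel and one is left with the quadratic inequality
$$u^2 - 5u + 3 \geq 0,$$
whose roots are $(5\pm\sqrt{13})/2$. Since $(5+\sqrt{13})/2 > 1$, on the relevant interval $(0,1)$ this holds exactly when $u \leq (5-\sqrt{13})/2$, i.e.\ $|\kappa| \leq \sqrt{(5-\sqrt{13})/2}$. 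Under the strict form of the hypothesis the comparison is strict, and in particular $4(1-\kappa^2)^2 - \kappa^6 > (1-\kappa^2)^3 > 0$, so the square root in the second bound is real and the outer $\max$ is redundant.

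Putting this together, the minimum in Theorem~\ref{th:bgk} item~1 is realized by $(1-\kappa^2)/|\kappa|$, which is strictly positive. Hence picking any $g$ with $0 < g^2 < (1-\kappa^2)/|\kappa|$ ensures $g\neq 0$ and $g^2$ below both bounds, and Theorem~\ref{th:bgk} item~2 then guarantees the existence of a $\widetilde{\beta}>1$ for which the invariant state is entangled. There is no genuine obstacle: the proof is essentially the identification of $(5-\sqrt{13})/2$ as the critical threshold for $\kappa^2$, which is a purely elementary calculation once the cube is expanded.
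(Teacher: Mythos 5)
Your proposal is correct and follows essentially the same route as the paper, which deduces the corollary from Theorem~\ref{th:bgk} by observing (via Figure~\ref{fig:g2lt-funct-k}) that the stability bound $(1-\kappa^2)/|\kappa|$ realizes the minimum precisely for $|\kappa|<\sqrt{(5-\sqrt{13})/2}$. Your explicit reduction to $u^2-5u+3\geq 0$ with $u=\kappa^2$ is exactly the computation behind the paper's threshold, so the argument is complete.
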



\noindent{\bf Remark.} It is interesting to compare, for fixed $\widetilde{\beta}$, values of $(\kappa,g)$
for which one gets stationary entangled states. Figure \ref{fig:1-noise-beta-comp}, shows the outcome
for $\widetilde{\beta}=1.05,$ $1.2$ and $1.28$.
\begin{figure}[h]
\begin{center}
\includegraphics[width=0.325\textwidth]{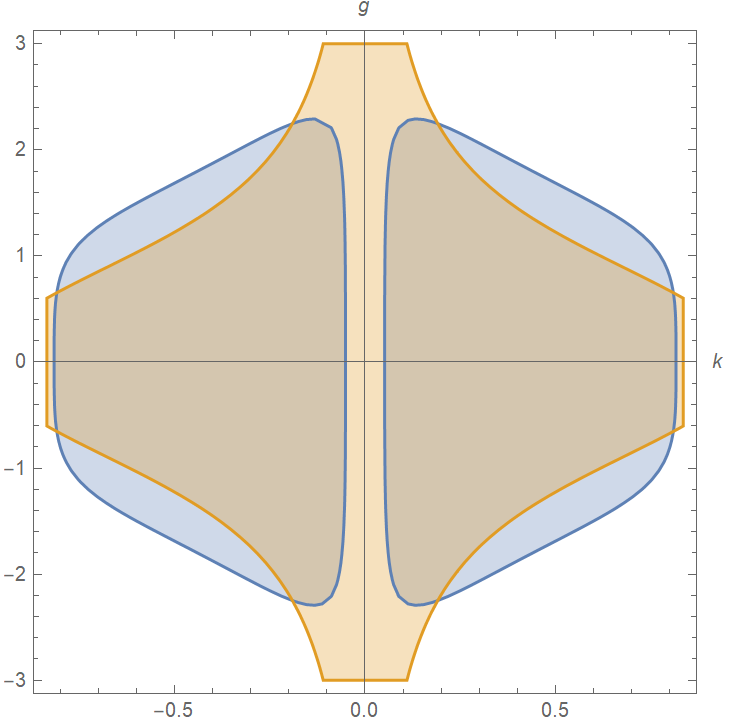} \
\includegraphics[width=0.32\textwidth]{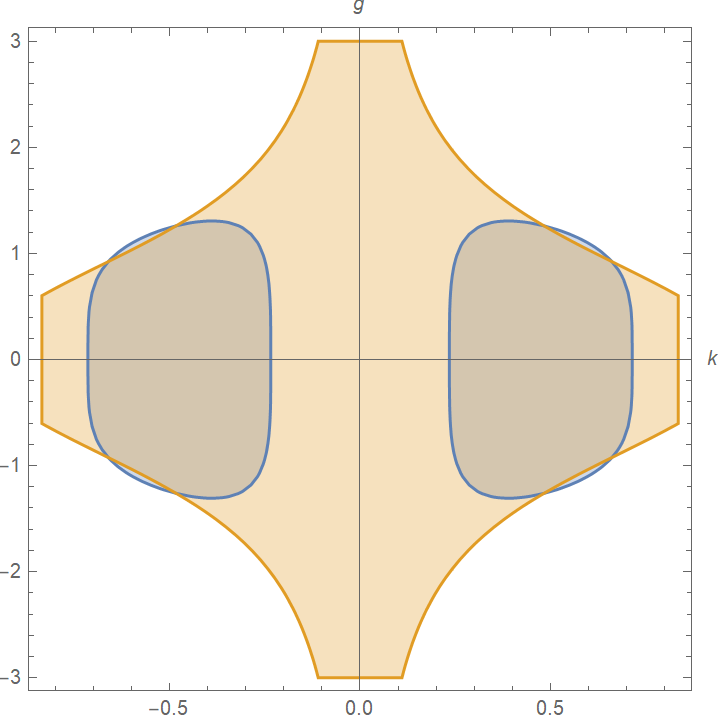} \
\includegraphics[width=0.32\textwidth]{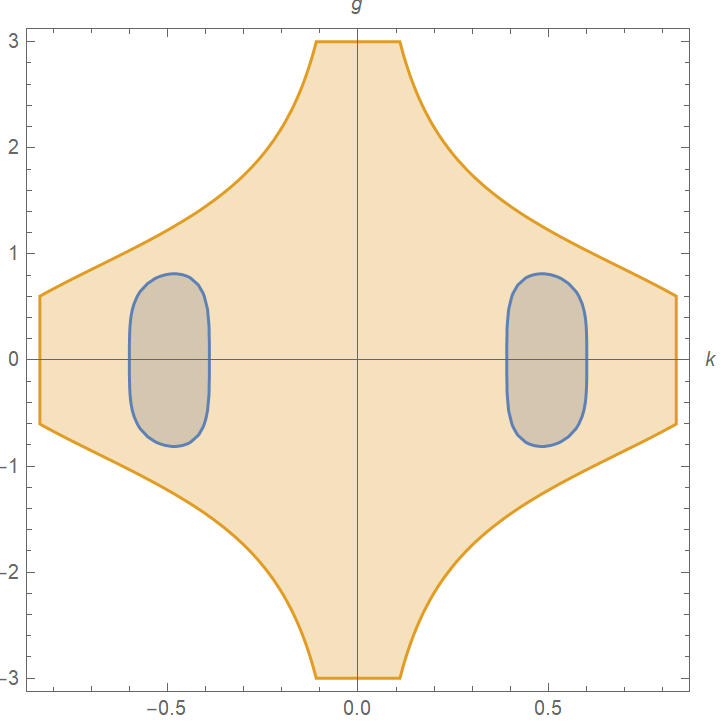} \\
  $\widetilde{\beta}=1.05$ \hskip 4truecm $\widetilde{\beta}=1.2$   \hskip 4truecm $\widetilde{\beta}=1.28$
  \caption{Comparison of entangled state regions, intersection of the light-orange and light-blue regions, in the $(\kappa,g)$ plane for different values of $\widetilde{\beta}$.} \label{fig:1-noise-beta-comp}
\end{center}
\end{figure}

The orange border region (light-orange color) is determined by the inequality $g^2|\kappa|<1-\kappa^2$,
the blue border region  (light blue color) by negativity
of the determinant of $\widetilde{\mathbf{S}}_{\text{\tiny red}}$. The Gaussian open system has a unique invariant
state which is Gaussian and entangled for $(\kappa,g)$ in the intersection of the two regions. One can note that, as the
temperature increases, there are less and less values of parameters $\kappa,g$ for which the
stationary state is entangled. The region shrinks to a single point, approximately $(0.51...,0)$, for $\widetilde{\beta}$
tending to $1.3$ and there is no Gaussian entangled state for $\widetilde{\beta}\gtrsim 1.3...$
It is worth noticing that, for any value of $\widetilde{\beta}$, the negative determinant has a minimum
for $\kappa\approx 0.5...$ and $g=0$. Indeed, quantifying the degree of entanglement by logarithmic negativity (see \cite{ViWe}
and the references therein), one could see that entanglement is bigger for $k\approx 0.5$, $g$ near $0$ (but non-zero)
and $\widetilde{\beta}$ slightly bigger than $1$. However, we do not develop these considerations in detail because they
only confirm the results already found without adding much.

\smallskip

\noindent{\bf Remark.} It is worth noticing that, choosing a $\widetilde{\beta}$ depending on $\kappa$, one
can get an entangled stationary state for essentially all parameters $(\kappa,g)$ fulfilling $g^2|\kappa|<1-\kappa^2$.
Indeed, the Taylor expansion of $\widetilde{\beta}_*$ in the variable $g^4$ at $g=0$ yields
$\widetilde{\beta}_* = 1 + 2 |\kappa| - 2 |\kappa|^3 - \kappa^4 + o(g^4)$. This suggests a choice of $\widetilde{\beta}<\widetilde{\beta}_*$
like $\widetilde{\beta}=1 + |\kappa|(1 -|\kappa|)/4$. Figure \ref{fig:ent-b(k)-kg} below shows the region
(again, intersection of light blue and yellow gray regions)
in the stripe $\{\,(\kappa,g)\,\mid\, 0<|\kappa|<0.8\,\}$ of values $(\kappa,g)$ for which this choice of
$\widetilde{\beta}$ yields an entangled state. Warning: points $(0,g)$ ($g$ arbitrary) must be removed.
\begin{figure}[h]
\begin{center}
\includegraphics[width=0.36\textwidth]{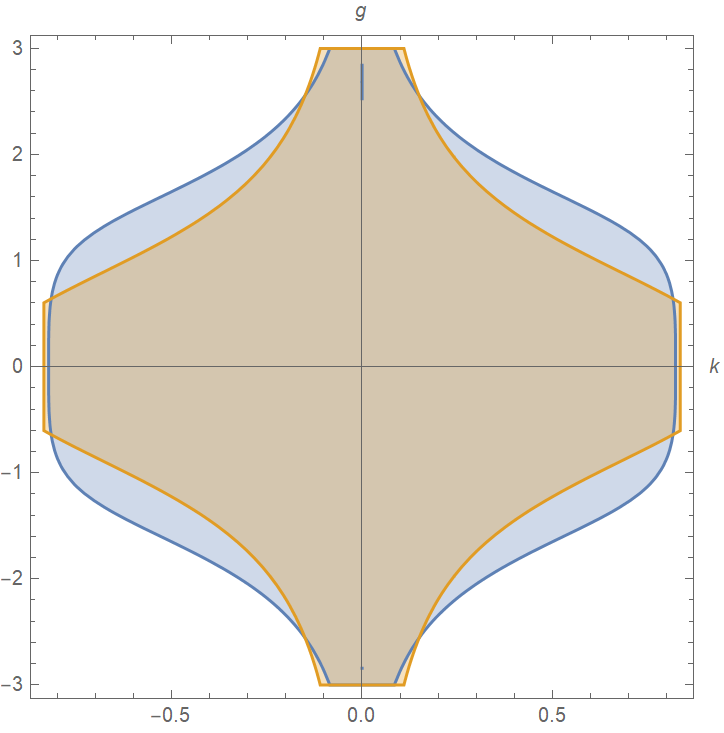}
\caption{$\widetilde{\beta}=1 + |\kappa|(1 -|\kappa|)/4$, determinant of $\widetilde{\mathbf{S}}_{\text{\tiny red}}$
negative in the light blue coloured region including almost all the stability region.}\label{fig:ent-b(k)-kg}
\end{center}
\end{figure}

Finally, for comparison with the two noise model that we shall analyze in the next section it is useful to have a
look at Figure \ref{fig:1-noise-kappa-comp}.
\begin{figure}[h]
\begin{center}
\includegraphics[width=0.27\textwidth]{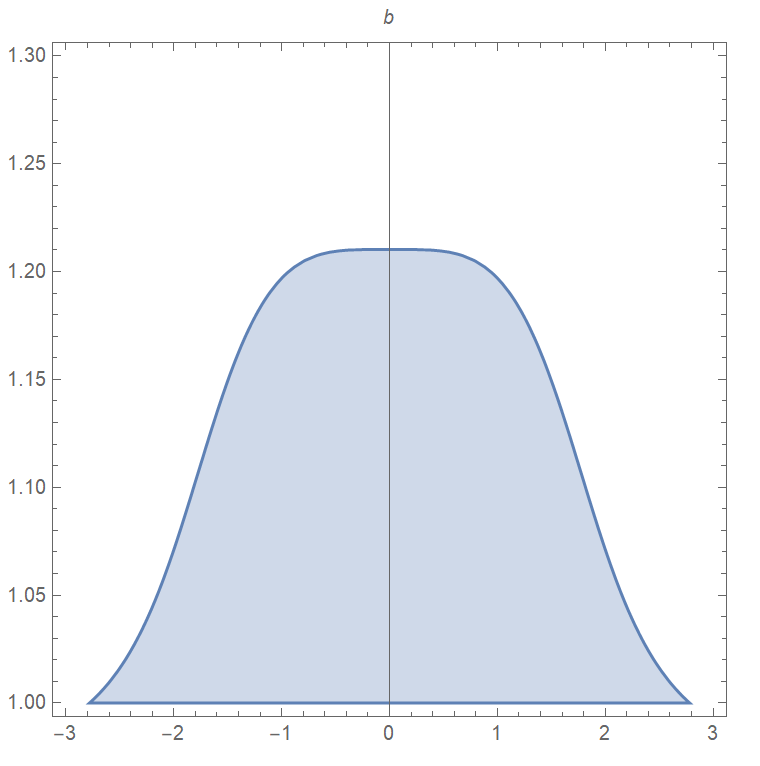} \
\includegraphics[width=0.28\textwidth]{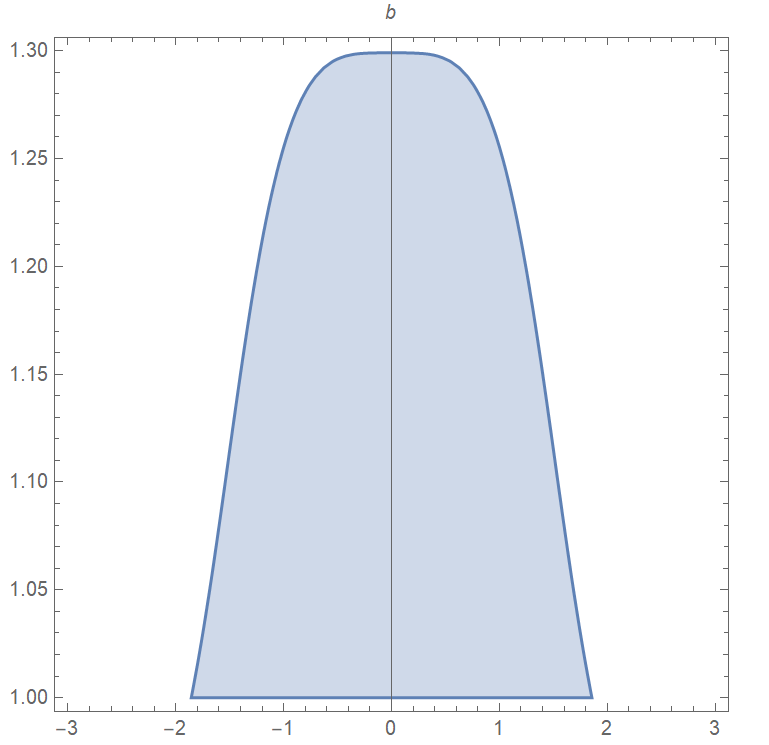} \
\includegraphics[width=0.28\textwidth]{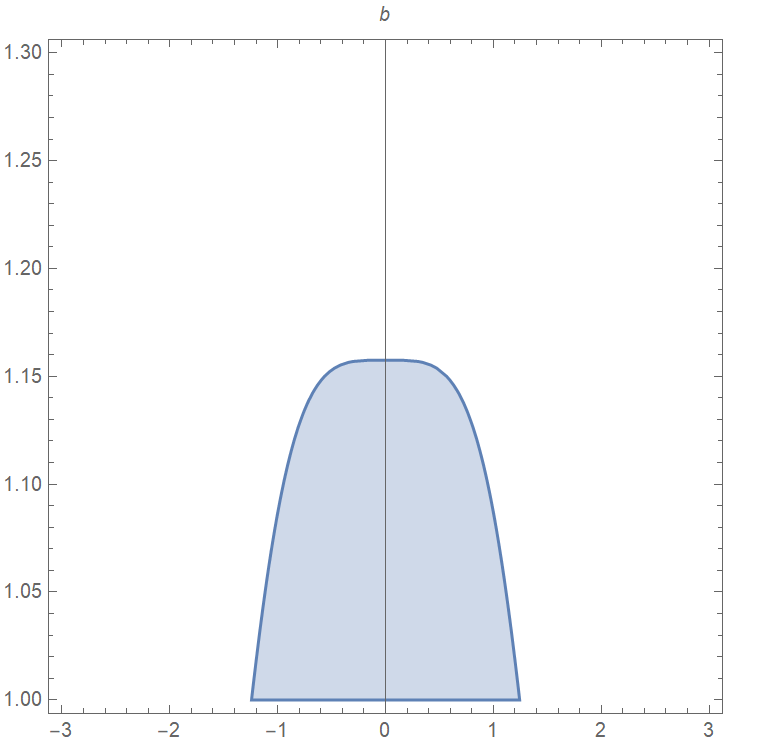} \\
  $\kappa=1/4$ \hskip 3.2truecm $\kappa=1/2$   \hskip 3.2truecm $\kappa=3/4$
  \caption{Entangled state for $(g,b=\widetilde{\beta})$ in the blue region for different $\kappa$ } \label{fig:1-noise-kappa-comp}
\end{center}
\end{figure}

\section{Two noise model}\label{sect:two-noises-model}

In this section we analyze the situation when also the second party interacts with a noise labelled by the
index $3$ and modelled by another damped and pumped quantum harmonic oscillator with inverse
temperature $\beta_3>0$. The inverse temperature of the other reservoir will be denoted by $\beta_0>0$.
Consider the 4-mode Fock space $\Gamma(\mathbb{C}^4)$ and the Gaussian GKLS generator \eqref{eq:GKLS}
with $m=4$, $L_1,L_2$ as in \eqref{eq:1noiseL1L2} and
\begin{eqnarray}
 & & \qquad L_3 = \left(\frac{\mathrm{e}^{\beta_3}}{\mathrm{e}^{\beta_3}-1}\right)^{1/2} a_3
 \qquad
  L_4 = \left(\frac{1}{\mathrm{e}^{\beta_3}-1}\right)^{1/2} a^\dagger_3  \label{eq:2noiseL3L4}  \\
  H & = & H_S + H_{\text{\tiny int} } \\
& = & \frac{1}{2}  \left(a_1 a^\dagger_{2}+a^\dagger_1 a_{2}\right) +\frac{\kappa}{2}\sum_{j=1}^2\left( a_j^2 + a_j^{\dagger 2}\right)
+\frac{g}{2}\left(a_0a^\dagger_{1}+a^\dagger_0 a_{1}+a_2a^\dagger_{3}+a^\dagger_2 a_{3}\right) \label{eq:2noiseH}
 \end{eqnarray}
where $g\not=0$ so that the interaction between parties 0-1 and 2-3 is non-zero. The Hilbert space of
the whole open quantum system is
\begin{eqnarray*}
  \mathsf{h}=   \Gamma(\mathbb{C}^4) & = & \underbrace{ {\Gamma({\mathbb{C}})}}_{\text{reservoir}}\otimes
   \underbrace{\Gamma({\mathbb{C}})\otimes  \Gamma({\mathbb{C}})}_{\text{system}}
                                    \otimes \underbrace{ {\Gamma({\mathbb{C}})}}_{\text{reservoir}} \\
    {\text{\footnotesize (indexes)}}
   & &  {\quad \text{\footnotesize 0}\hskip 1.4truecm (\text{\footnotesize 1} \hskip 0.2truecm \otimes \hskip 0.2truecm
   \text{\footnotesize 2})\qquad \qquad \text{\footnotesize 3}}
\end{eqnarray*}

The matrices ${\mathbf{Z}},{\mathbf{C}}$  in the explicit formula for the
action of the Gaussian QMS on Weyl operators (see \cite{AFPMJM} Theorem 2.4) are as follows.
\begin{eqnarray}
\mathbf{Z} & = & \frac{1}{2}\left[ \begin{array}{cccccccc}
                      -1 & 0 & 0 & 0 & 0 & -g & 0 & 0 \\
                      0 & 0 & 0 & 0  & -g & \kappa & -1 & 0 \\
                      0 & 0 & 0 & 0  & 0 & -1 & \kappa & -g \\
                     0 & 0 & 0 & -1  & 0 & 0 & -g & 0 \\
                     0 & g & 0 & 0 & -1 & 0 & 0 & 0 \\
                     g & \kappa & 1 & 0 & 0 & 0 & 0 & 0 \\
                     0 & 1 & \kappa& g & 0 & 0 & 0 & 0 \\
                     0 & 0 & g & 0 &0 & 0 & 0 & -1 \\
                    \end{array}\right]                          \label{eq:Z-2noise}          \\
\mathbf{C} & = &  \left[ \begin{array}{cccccccc}
                      \widetilde{\beta}_0 & 0 & 0 & 0 & 0 & 0 & 0 & 0 \\
                      0 & 0 & 0 & 0  & 0 & 0 & 0 & 0 \\
                      0 & 0 & 0 & 0  & 0 & 0 & 0 & 0 \\
                     0 & 0 & 0 & \widetilde{\beta}_3  & 0 & 0 & 0 & 0 \\
                     0 & 0 & 0 & 0 & \widetilde{\beta}_0 & 0 & 0 & 0 \\
                     0 & 0 & 0 & 0 & 0 & 0 & 0 & 0 \\
                     0 &0 & 0 & 0 & 0 & 0 & 0 & 0 \\
                     0 & 0 & 0 & 0 &0 & 0 & 0 & \widetilde{\beta}_3 \\
                    \end{array}\right]  \label{eq:C-2noise}
\end{eqnarray}
where $\widetilde{\beta}_j=\coth(\beta_j/2)$ for $j=0,3$.
The characteristic polynomial of $\mathbf{Z}$ is
\begin{equation}\label{eq:char_poly_Z}
\left(\lambda ^4+\lambda^3 +  \frac{2(1+g^2)-\kappa^2}{4} \lambda ^2
+ \frac{1-\kappa^2+g^2}{4}\lambda
+\frac{1-\kappa^2+g^4}{16}\right)^2
\end{equation}
Recall that, given a fourth order polynomial,
\[
\lambda^4+a_3\lambda^3+a_2\lambda^2+a_1\lambda+a_0
\]
the Routh-Hurwitz criterion for stability (see  \cite{Math24}) reads as follows:
\[
a_i>0\ (i=0,1,2,3),\quad a_3a_2>a_1,\quad a_3a_2a_1>a_3^2a_0+a_1^2
\]
The first condition is equivalent to $\kappa^2 <\min\{\,1+g^2,1+g^4\,\}$ 
the second condition always hold and the last one reduces to  $g^2(2-\kappa^2)>0$. Summarizing
\begin{lemma}
Assume $g\not=0$. The matrix $\mathbf Z$ \eqref{eq:Z-2noise}  is stable if and only if
\begin{equation}\label{eq:stability-Z-2noises}
\kappa^2<\min\{\,1+g^4,2\,\}.
\end{equation}
\end{lemma}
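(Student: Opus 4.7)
The plan is to apply the Routh--Hurwitz criterion for a fourth order polynomial, quoted in the excerpt, to the quartic factor $p(\lambda)=\lambda^4+\lambda^3+a_2\lambda^2+a_1\lambda+a_0$ appearing in \eqref{eq:char_poly_Z}, where
\[
a_3=1,\qquad a_2=\frac{2(1+g^2)-\kappa^2}{4},\qquad a_1=\frac{1-\kappa^2+g^2}{4},\qquad a_0=\frac{1-\kappa^2+g^4}{16}.
\]
Since the characteristic polynomial of $\mathbf{Z}$ is $p(\lambda)^2$, the spectrum of $\mathbf{Z}$ (with multiplicities) coincides with the roots of $p$, each doubled, so stability of $\mathbf{Z}$ is equivalent to stability of $p$.

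First I would verify the four positivity conditions. The inequality $a_3>0$ is automatic, $a_1>0$ gives $\kappa^2<1+g^2$, and $a_0>0$ gives $\kappa^2<1+g^4$; the remaining $a_2>0$ reads $\kappa^2<2+2g^2$, which is implied by $a_1>0$. Next I would check the auxiliary inequality $a_3a_2>a_1$: a direct subtraction yields
\[
a_3a_2-a_1=\frac{(2+2g^2-\kappa^2)-(1-\kappa^2+g^2)}{4}=\frac{1+g^2}{4}>0,
\]
so this condition holds identically and contributes nothing.

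The main step is the third Routh--Hurwitz inequality $a_3a_2a_1>a_3^2a_0+a_1^2$. Setting $A=1-\kappa^2+g^2$ so that $2+2g^2-\kappa^2=A+1+g^2$, one computes
\[
16\bigl(a_2a_1-a_0-a_1^2\bigr)=(A+1+g^2)A-(1-\kappa^2+g^4)-A^2=(1+g^2)A-(1-\kappa^2+g^4),
\]
and expanding $(1+g^2)(1-\kappa^2+g^2)$ and cancelling yields $2g^2-g^2\kappa^2=g^2(2-\kappa^2)$. Since $g\neq0$, this inequality is equivalent to $\kappa^2<2$.

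Finally I would combine the three surviving constraints $\kappa^2<1+g^2$, $\kappa^2<1+g^4$, and $\kappa^2<2$. A short case analysis shows that the bound $1+g^2$ is always dominated by one of the other two: when $g^2\le1$ one has $1+g^4\le1+g^2$, and when $g^2>1$ one has $2<1+g^2$; in either case $\min\{1+g^2,1+g^4,2\}=\min\{1+g^4,2\}$, giving the claimed condition \eqref{eq:stability-Z-2noises}. I do not expect any serious obstacle here; the only point requiring care is the algebraic simplification of the third Routh--Hurwitz inequality, which is where the critical factor $(2-\kappa^2)$ emerges.
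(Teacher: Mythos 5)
Your proposal is correct and follows exactly the paper's route: apply the Routh--Hurwitz criterion to the quartic factor of \eqref{eq:char_poly_Z}, observe that $a_3a_2-a_1=(1+g^2)/4>0$ always, reduce the third inequality to $g^2(2-\kappa^2)>0$, and note that the bound $\kappa^2<1+g^2$ is absorbed into $\min\{1+g^4,2\}$. The only difference is that you spell out the algebraic simplifications the paper leaves implicit.
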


The covariance matrix of the invariant state is the unique solution of
$\mathbf{Z}^{\rm\scriptstyle T} \mathbf{S}+\mathbf{S}\mathbf{Z}+\mathbf{C}=0$
by Theorem \ref{thm:esZCesZ-conv}. However, by leaving arbitrary all the parameters, we get a cumbersome expression.
Therefore we begin by the case $\kappa=0$ in which we get the following
\begin{proposition}\label{prop:2noises-k=0}
If $\kappa=0$ the covariance matrix of the unique invariant state of the bipartite system is
\[
\mathbf{S}_{\text{\rm\tiny red}} =
\frac{\widetilde{\beta}_0+\widetilde{\beta}_3}{2} \unit_4
+ \frac{(\widetilde{\beta}_0-\widetilde{\beta}_3)g^2}{4(1+g^4)}
\left[ \begin{array}{cccc}
-(1-g^2)  & 0 & 0 & -(1+g^2) \\
0 & 1 - g^2 & 1 + g^2 & 0 \\
0 & 1 + g^2 & -(1 - g^2) & 0 \\
-(1+g^2) & 0 & 0 & (1- g^2)
 \end{array} \right]
\]
The invariant state is separable for all $g\not=0,\widetilde{\beta}_0>1,\widetilde{\beta}_3>1$.
\end{proposition}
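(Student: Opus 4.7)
The plan is to solve the Lyapunov equation \eqref{eq:ZTS+SZ+C=0} with $\kappa=0$ explicitly, take the partial trace to obtain the $4\times 4$ block $\mathbf{S}_{\text{\rm\tiny red}}$ of modes $1$ and $2$, and then apply Theorem \ref{th:bipartite-gauss-ent} to conclude separability.

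First, substituting $\kappa=0$ into \eqref{eq:Z-2noise}--\eqref{eq:C-2noise} and running the procedure of Appendix A yields the full $8\times 8$ covariance $\mathbf{S}$. With $\kappa=0$ the system Hamiltonian is passive (number-preserving up to the chain hopping 0-1-2-3), so the Lyapunov system carries additional symmetry and the closed form can be confirmed by direct substitution into $\mathbf{Z}^{\rm T}\mathbf{S}+\mathbf{S}\mathbf{Z}+\mathbf{C}=0$. Deleting the rows and columns indexed by the reservoir modes $0$ and $3$ then produces the displayed $\mathbf{S}_{\text{\rm\tiny red}}$.

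The crucial observation for separability is that $\mathbf{S}_{\text{\rm\tiny red}} = \alpha\,\unit_4 + \gamma\, M$, where $\alpha = (\widetilde{\beta}_0+\widetilde{\beta}_3)/2$, $\gamma = (\widetilde{\beta}_0-\widetilde{\beta}_3)g^2/[4(1+g^4)]$ and the nonzero off-diagonal entries of $M$ couple only index $1$ with index $4$ and index $2$ with index $3$. Each of the two resulting $2\times 2$ sub-blocks has trace zero and determinant $-2(1+g^4)$, so the eigenvalues of $M$ are $\pm\sqrt{2(1+g^4)}$, each with multiplicity two. Setting $c:=g^2/\sqrt{2(1+g^4)}\in[0,1/\sqrt{2})$ and assuming without loss of generality $\widetilde{\beta}_0\ge\widetilde{\beta}_3$, the smallest eigenvalue of $\mathbf{S}_{\text{\rm\tiny red}}$ equals
\[
\frac{\widetilde{\beta}_0+\widetilde{\beta}_3}{2}-\frac{(\widetilde{\beta}_0-\widetilde{\beta}_3)\,c}{2}
= \frac{\widetilde{\beta}_0(1-c)+\widetilde{\beta}_3(1+c)}{2}\ \ge\ \frac{(1-c)+(1+c)}{2} = 1,
\]
because $\widetilde{\beta}_j>1$ and $1\pm c>0$; equivalently $\mathbf{S}_{\text{\rm\tiny red}}\ge \unit_4$.

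Finally, by Theorem \ref{th:bipartite-gauss-ent} separability is equivalent to $\widetilde{\mathbf{S}}_{\text{\rm\tiny red}} = \mathbf{S}_{\text{\rm\tiny red}} + P\ge 0$, where $P$ is the partial-transpose correction appearing in \eqref{eq:S-tilde}. After reordering rows and columns so that $P$ splits over the index pairs $(1,3)$ and $(2,4)$, it becomes a direct sum of two $2\times 2$ Pauli-$y$-type blocks with spectrum $\{+1,+1,-1,-1\}$, and hence $\unit_4+P\ge 0$. Combined with the bound just obtained this yields the positive semidefinite sandwich
\[
\widetilde{\mathbf{S}}_{\text{\rm\tiny red}} \ = \ \mathbf{S}_{\text{\rm\tiny red}} + P \ \ge\ \unit_4 + P \ \ge\ 0,
\]
which proves separability. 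I expect the main technical effort to lie in the explicit Lyapunov computation; once the block structure of $M$ and the bound $\mathbf{S}_{\text{\rm\tiny red}}\ge\unit_4$ are in hand, separability follows immediately from this short sandwich.
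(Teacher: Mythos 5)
Your proof is correct, and the separability argument takes a genuinely different route from the paper's. The paper writes out the full quartic characteristic polynomial of $\widetilde{\mathbf{S}}_{\text{\tiny red}}$, verifies through a sequence of elementary estimates that its coefficients alternate in sign for all $g\not=0$ and $\widetilde{\beta}_0,\widetilde{\beta}_3>1$, and concludes by Descartes' rule of signs that all eigenvalues are strictly positive. You instead diagonalize the perturbation $M$ directly (the two trace-zero $2\times 2$ blocks with determinant $-2(1+g^4)$ give spectrum $\pm\sqrt{2(1+g^4)}$, each doubly degenerate), deduce the operator inequality $\mathbf{S}_{\text{\tiny red}}\ge\unit_4$, and finish with the sandwich $\widetilde{\mathbf{S}}_{\text{\tiny red}}\ge\unit_4+P\ge 0$; your eigenvalue computations check out. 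This avoids the characteristic-polynomial bookkeeping entirely and makes the mechanism transparent: the temperature-imbalance part of the covariance is damped by the factor $c=g^2/\sqrt{2(1+g^4)}<1/\sqrt{2}$, too small to push any eigenvalue of $\mathbf{S}_{\text{\tiny red}}$ below the vacuum level. A further small dividend of your bound: $\mathbf{S}_{\text{\tiny red}}\ge\unit_4=\unit_2\oplus\unit_2$ already certifies separability by the Werner--Wolf direct-sum criterion quoted in Section 2, with both marginals taken to be vacuum covariances, so the partial-transpose matrix $P$ is not even needed. The paper's heavier computation is not wasted in context, though: the same characteristic polynomial is the tool used throughout the rest of the paper to detect entanglement via negativity of the determinant in the regimes where the state is \emph{not} separable, whereas your spectral shortcut is specific to the highly structured $\kappa=0$ case.
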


\begin{proof}
The characteristic polynomial of the matrix $\widetilde{\mathbf{S}}_{\text{\tiny red}}$
defined as in \eqref{eq:S-tilde}  is
 \begin{eqnarray*}
 & & \lambda^4 -2(\widetilde{\beta}_0+\widetilde{\beta}_3)\lambda^3
 + \left(\frac{3}{2}(\widetilde{\beta}_0+\widetilde{\beta}_3)^2-2-\frac{(\widetilde{\beta}_0-\widetilde{\beta}_3)^2g^4}{4(1+g^4)}\right)
 \lambda^2 \\
 & &
 +(\widetilde{\beta}_0+\widetilde{\beta}_3)\left(\frac{(\widetilde{\beta}_0-\widetilde{\beta}_3)^2 g^4}{4 \left(g^4+1\right)}
 -\frac{1}{2} (\widetilde{\beta}_0+\widetilde{\beta}_3)^2+2\right)\lambda  +1 -\frac{1}{2} (\widetilde{\beta}_0+\widetilde{\beta}_3)^2
 +\frac{1}{16} (\widetilde{\beta}_0+\widetilde{\beta}_3)^4 \\
 & &
   -\frac{g^4 (\widetilde{\beta}_0-\widetilde{\beta}_3)^2 (\widetilde{\beta}_0+\widetilde{\beta}_3)^2}{16 \left(1+g^4\right)}
   +\frac{g^8 (\widetilde{\beta}_0-\widetilde{\beta}_3)^4+32
   g^6 (\widetilde{\beta}_0-\widetilde{\beta}_3)^2}{64 \left(1+g^4\right)^2}
 \end{eqnarray*}
where the last term (determinant of $\widetilde{\mathbf{S}}_{\text{\tiny red}}$) can also be written as
\[
\left(\left(\frac{\widetilde{\beta}_0+\widetilde{\beta}_3}{2}\right)^2  -1\right)^2
+ \frac{(\widetilde{\beta}_0-\widetilde{\beta}_3)^2g^4}{16\left(1+g^4\right)}
\left(\frac{g^4 (\widetilde{\beta}_0-\widetilde{\beta}_3)^2+32
   g^2 }{4\left(1+g^4\right) }- (\widetilde{\beta}_0+\widetilde{\beta}_3)^2 \right)
\]
Note that the $\lambda^2$ coefficient satisfies
\begin{eqnarray*}
\frac{3}{2}(\widetilde{\beta}_0+\widetilde{\beta}_3)^2-2-\frac{(\widetilde{\beta}_0-\widetilde{\beta}_3)^2g^4}{4(1+g^4)}
& = & \left(\frac{(\widetilde{\beta}_0+\widetilde{\beta}_3)^2}{2}-2\right)
+(\widetilde{\beta}_0+\widetilde{\beta}_3)^2 -\frac{(\widetilde{\beta}_0-\widetilde{\beta}_3)^2g^4}{4(1+g^4)} \\
& \geq & (\widetilde{\beta}_0+\widetilde{\beta}_3)^2 -\frac{(\widetilde{\beta}_0-\widetilde{\beta}_3)^2}{4} \geq 0
\end{eqnarray*}
Minus the first order coefficient is a multiple of
\begin{eqnarray*}
\frac{1}{2} (\widetilde{\beta}_0+\widetilde{\beta}_3)^2-2-\frac{(\widetilde{\beta}_0-\widetilde{\beta}_3)^2 g^4}{4 \left(g^4+1\right)}
& \geq &  \frac{1}{2} (\widetilde{\beta}_0+\widetilde{\beta}_3)^2-2-\frac{(\widetilde{\beta}_0-\widetilde{\beta}_3)^2 }{4 }  \\
& = &  2\left( \left(\frac{ \widetilde{\beta}_0+\widetilde{\beta}_3 }{2}\right)^2-1  \right)
-\frac{(\widetilde{\beta}_0-\widetilde{\beta}_3)^2 }{4 }  \\
& \geq &  \left(\frac{ \widetilde{\beta}_0+\widetilde{\beta}_3 }{2}\right)^2-1
-\left(\frac{\widetilde{\beta}_0-\widetilde{\beta}_3  }{2 }\right)^2 \\
& = & \widetilde{\beta}_0 \widetilde{\beta}_3 -1
\end{eqnarray*}
Therefore the first order coefficient is strictly negative.

The $0$ order term of the fourth order polynomial is bigger than
\begin{eqnarray*}
   & &  \left(\left(\frac{\widetilde{\beta}_0+\widetilde{\beta}_3}{2}\right)^2  -1\right)^2
-\frac{(\widetilde{\beta}_0-\widetilde{\beta}_3)^2 (\widetilde{\beta}_0+\widetilde{\beta}_3)^2 g^4}{16\left(1+g^4\right)}\\
   &\geq  & \left(\left(\frac{\widetilde{\beta}_0+\widetilde{\beta}_3}{2}\right)^2  -1\right)^2
-\left(\frac{\widetilde{\beta}_0-\widetilde{\beta}_3}{2}\right)^2 \left(\frac{\widetilde{\beta}_0+\widetilde{\beta}_3}{2}\right)^2 \\
& = & \left(  \frac{\widetilde{\beta}_0+\widetilde{\beta}_3}{2}\widetilde{\beta}_0-1\right)
\left(  \frac{\widetilde{\beta}_0+\widetilde{\beta}_3}{2}\widetilde{\beta}_3-1\right) >0
\end{eqnarray*}
Therefore by Descartes' rule of signs all roots are strictly positive and the invariant state is not entangled.
\end{proof}

\begin{proposition}\label{prop:2noises-k=1}
If $\kappa=1$, defining $\widehat{g}=(1+g^2)/g^4$, the covariance matrix $\mathbf{S}_{\text{\rm\tiny red}} $ of the unique invariant state is
\[
\frac{1}{2}\left[
\begin{array}{cccc}
  \scriptstyle{\left(2+\widehat{g}\right)(\widetilde{\beta}_0+\widetilde{\beta}_3)-(\widetilde{\beta}_3-\widetilde{\beta}_0)}
   & \scriptstyle{\left(1+\widehat{g}\right) (\widetilde{\beta}_0+\widetilde{\beta}_3)}
   & \scriptstyle{2g^2\widehat{g}\,\widetilde{\beta}_0} & \scriptstyle{-g^2 \widehat{g} (\widetilde{\beta}_0-\widetilde{\beta}_3)} \\
 \scriptstyle{\left(1+\widehat{g}\right)  (\widetilde{\beta}_0+\widetilde{\beta}_3)}
   & \scriptstyle{\left(2+\widehat{g}\right)(\widetilde{\beta}_0+\widetilde{\beta}_3)}
   \scriptstyle{+(\widetilde{\beta}_3-\widetilde{\beta}_0)} & \scriptstyle{g^2\widehat{g}\,(\widetilde{\beta}_0-\widetilde{\beta}_3)}
   &  \scriptstyle{ 2g^2\widehat{g}\,\widetilde{\beta}_3 } \\
  \scriptstyle{2g^2\widehat{g}\,\widetilde{\beta}_0} &
    \scriptstyle{g^2\widehat{g}\, (\widetilde{\beta}_0-\widetilde{\beta}_3)}
    & \scriptstyle{\left(2+\widehat{g}\right)(\widetilde{\beta}_0+\widetilde{\beta}_3)-(\widetilde{\beta}_3-\widetilde{\beta}_0)}
    & \scriptstyle{-\left(1+\widehat{g}\right)(\widetilde{\beta}_0+\widetilde{\beta}_3)} \\
 \scriptstyle{-g^2 \widehat{g} (\widetilde{\beta}_0-\widetilde{\beta}_3)} & \scriptstyle{2g^2\widehat{g}\,\widetilde{\beta}_3}
 & \scriptstyle{-\left(1+\widehat{g}\right)(\widetilde{\beta}_0+\widetilde{\beta}_3)}
 & \scriptstyle{\left(2+\widehat{g}\right)(\widetilde{\beta}_0+\widetilde{\beta}_3)}
   \scriptstyle{+(\widetilde{\beta}_3-\widetilde{\beta}_0)} \\
\end{array}
\right]
\]
For all $\widetilde{\beta}_0,\widetilde{\beta}_3 > 1$ such that
  \begin{equation}\label{eq:2-noise-big-g-neg-det}
  2(\widetilde{\beta}_0 + \widetilde{\beta}_3 -1 )^{2}  - (\widetilde{\beta}_0 - \widetilde{\beta}_3)^2 < 6,
  \end{equation}
  one can find a $g^2$ big enough for which the stationary state is entangled.
\end{proposition}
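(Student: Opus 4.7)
The plan is to proceed in three stages: solve the Lyapunov equation for $\mathbf{S}$ at $\kappa = 1$, reduce to the two system modes to obtain $\mathbf{S}_{\text{\rm\tiny red}}$, and then exhibit an entanglement witness for the perturbed matrix $\widetilde{\mathbf{S}}_{\text{\rm\tiny red}}$ for $g^2$ large enough.

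First, setting $\kappa=1$ in \eqref{eq:Z-2noise} and \eqref{eq:C-2noise}, one checks from \eqref{eq:stability-Z-2noises} that $\mathbf{Z}$ is stable for every $g\neq 0$, so Theorem \ref{thm:esZCesZ-conv} guarantees that the invariant covariance is the unique solution of \eqref{eq:ZTS+SZ+C=0}. I would solve this $8 \times 8$ linear system by the elementwise procedure described in Appendix A, then delete the rows and columns indexed by the two reservoir modes (labels $0$ and $3$, corresponding to positions $1, 4, 5, 8$ in the ordering of Section \ref{sect:GQMS}) to obtain $\mathbf{S}_{\text{\rm\tiny red}}$. Collecting powers of $g$, factoring out the common denominator, and introducing $\widehat{g} = (1+g^2)/g^4$ should reproduce the stated $4 \times 4$ matrix.

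Next, by Theorem \ref{th:bipartite-gauss-ent}, the state is entangled if and only if $\widetilde{\mathbf{S}}_{\text{\rm\tiny red}} = \mathbf{S}_{\text{\rm\tiny red}} + \Delta$, with $\Delta$ the $\pm\mi$ matrix appearing in \eqref{eq:S-tilde}, is not positive semidefinite. As in the single-noise analysis of Theorem \ref{th:bgk}, I will use strict negativity of $\det \widetilde{\mathbf{S}}_{\text{\rm\tiny red}}$ as an entanglement witness: since this $4 \times 4$ Hermitian matrix has real eigenvalues whose product equals the determinant, $\det \widetilde{\mathbf{S}}_{\text{\rm\tiny red}} < 0$ forces an odd number of negative eigenvalues, hence failure of positive semidefiniteness.

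Finally, I would study $\det \widetilde{\mathbf{S}}_{\text{\rm\tiny red}}$ in the limit $g^2 \to \infty$, where $\widehat{g} \to 0$ while $g^2 \widehat{g} = (1+g^2)/g^2 \to 1$. Under these asymptotics $\mathbf{S}_{\text{\rm\tiny red}}$ converges to an explicit $4 \times 4$ matrix $M_\infty$ depending only on $\widetilde{\beta}_0$ and $\widetilde{\beta}_3$. A direct expansion of $\det(M_\infty + \Delta)$ should produce, after simplification, a manifestly positive prefactor times the quantity $2(\widetilde{\beta}_0 + \widetilde{\beta}_3 - 1)^2 - (\widetilde{\beta}_0 - \widetilde{\beta}_3)^2 - 6$, so that hypothesis \eqref{eq:2-noise-big-g-neg-det} is exactly the condition making this limit strictly negative. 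Continuity of $\det \widetilde{\mathbf{S}}_{\text{\rm\tiny red}}$ in $g$ then supplies the desired $g^2$ by choosing it large enough, and Theorem \ref{th:bipartite-gauss-ent} delivers entanglement. The main obstacle is algebraic bulk: solving the Lyapunov equation with three free parameters ($g, \widetilde{\beta}_0, \widetilde{\beta}_3$) at $\kappa = 1$ and then expanding the $4\times 4$ determinant to recognize the combination $2(\widetilde{\beta}_0 + \widetilde{\beta}_3 - 1)^2 - (\widetilde{\beta}_0 - \widetilde{\beta}_3)^2$ in the limit is mechanical but voluminous, and in practice requires symbolic computation to verify reliably.
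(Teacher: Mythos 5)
Your proposal is correct and follows essentially the same route as the paper: solve the Lyapunov equation \eqref{eq:ZTS+SZ+C=0} at $\kappa=1$, reduce to the system modes, use negativity of $\det\widetilde{\mathbf{S}}_{\text{\tiny red}}$ as the entanglement witness, and observe that the limit of this determinant as $g^2\to\infty$ is negative precisely under condition \eqref{eq:2-noise-big-g-neg-det}, so continuity in $g$ finishes the argument. The paper's only additional content is an explicit sign check of the remaining coefficients of the characteristic polynomial of $\widetilde{\mathbf{S}}_{\text{\tiny red}}$, which is not needed for the stated implication since your odd-number-of-negative-eigenvalues observation already yields failure of positive semidefiniteness.
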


\begin{proof}
The characteristic polynomial of $\widetilde{\mathbf{S}}_{\text{\tiny red}}$ now is
\begin{eqnarray*}
  & & \lambda^4 -\frac{2(\widetilde{\beta}_0+\widetilde{\beta}_3)(1+g^2+2g^4)}{g^4}\lambda^3 \\
  & & +\left( \frac{\left(7 g^8+4 g^6+9 g^4+4 g^2+2\right) (\widetilde{\beta}_0+\widetilde{\beta}_3)^2}{2 g^8}
  +\frac{2 \widetilde{\beta}_0 \widetilde{\beta}_3 \left(3 g^4+4    g^2+2\right)}{g^4}-2 \right)\lambda^2 \\
   & & - \frac{1+g^2+2g^4}{2g^8}\left( (\widetilde{\beta}_0 +\widetilde{\beta}_3)^3(1+g^4)
   +4g^4(\widetilde{\beta}_0 \widetilde{\beta}_3-1)(\widetilde{\beta}_0 +\widetilde{\beta}_3)\right) \lambda \\
   & & +\frac{ (\widetilde{\beta}_0+\widetilde{\beta}_3)^4\left(1+g^4\right)^2}{16 g^8}
   -\frac{\left(2+4g^2+5 g^4+3 g^8\right) (\widetilde{\beta}_0+\widetilde{\beta}_3)^2}{2g^8} \\
   & & +\frac{\widetilde{\beta}_0\widetilde{\beta}_3 (\widetilde{\beta}_0+\widetilde{\beta}_3)^2\left(1+g^4\right)}{2 g^4}
   -\frac{2 \widetilde{\beta}_0\widetilde{\beta}_3 \left(2+4 g^2+g^4\right)}{g^4}+1+\widetilde{\beta}_0^2\widetilde{\beta}_3^2
\end{eqnarray*}
The $\lambda^3$ coefficient is clearly negative. The $\lambda^2$ coefficient is positive by
\begin{eqnarray*}
  & & \frac{\left(7 g^8+4 g^6+9 g^4+4 g^2+2\right) (\widetilde{\beta}_0+\widetilde{\beta}_3)^2}{2 g^8}
  +\frac{2 \widetilde{\beta}_0 \widetilde{\beta}_3 \left(3 g^4+4g^2+2\right)}{g^4}-2  \\
  & = & \frac{7 (\widetilde{\beta}_0+\widetilde{\beta}_3)^2}{2} - 2
  +\frac{\left(4 g^6+9 g^4+4 g^2+2\right) (\widetilde{\beta}_0+\widetilde{\beta}_3)^2}{2 g^8}
  +\frac{2 \widetilde{\beta}_0 \widetilde{\beta}_3 \left(3 g^4+4    g^2+2\right)}{g^4} \\
  & \geq & 12  +\frac{\left(4 g^6+9 g^4+4 g^2+2\right) (\widetilde{\beta}_0+\widetilde{\beta}_3)^2}{2 g^8}
  +\frac{2 \widetilde{\beta}_0 \widetilde{\beta}_3 \left(3 g^4+4    g^2+2\right)}{g^4}  > 0
\end{eqnarray*}
The $\lambda$ coefficient is clearly negative.

The coefficient of $\lambda^0$ (i.e. the determinant) is more complicated. For this reason it has been
written collecting sums and products of inverse temperatures.
Computing the limit as $g^2$ goes to infinity of the determinant we find
\begin{eqnarray*}
& & 1+ \frac{(\widetilde{\beta}_0 +\widetilde{\beta}_3)^4}{16}
+ \frac{\widetilde{\beta}_0 \widetilde{\beta}_3 (\widetilde{\beta}_0 +\widetilde{\beta}_3)^2}{2} +   \widetilde{\beta}_0^2 \widetilde{\beta}_3^2
-\frac{3}{2}   (\widetilde{\beta}_0 +\widetilde{\beta}_3)^2-2 \widetilde{\beta}_0 \widetilde{\beta}_3 \\
& = & 1+\left(\frac{(\widetilde{\beta}_0 +\widetilde{\beta}_3)^2}{4} +\widetilde{\beta}_0  \widetilde{\beta}_3 \right)^2
-\frac{3}{2}   (\widetilde{\beta}_0 +\widetilde{\beta}_3)^2-2 \widetilde{\beta}_0 \widetilde{\beta}_3 \\
& = & \left(\frac{(\widetilde{\beta}_0 +\widetilde{\beta}_3)^2}{4} +\widetilde{\beta}_0  \widetilde{\beta}_3 -1\right)^2
- (\widetilde{\beta}_0 +\widetilde{\beta}_3)^2.
\end{eqnarray*}
Since $\widetilde{\beta}_0,  \widetilde{\beta}_3 >1$, the argument of the squares are positive and the above limit is negative
if and only if $(\widetilde{\beta}_0 +\widetilde{\beta}_3)^2  +4\widetilde{\beta}_0  \widetilde{\beta}_3 -4 < 4(\widetilde{\beta}_0 +\widetilde{\beta}_3)$. A little manipulation shows that this is equivalent to inequality \eqref{eq:2-noise-big-g-neg-det}.
\end{proof}

\smallskip

\noindent{\bf Remark.}
The left-hand side of \eqref{eq:2-noise-big-g-neg-det} determines an hyperbola with center $(1/2,1/2)$ and asymptotes
$(\sqrt{2}\mp 1)\widetilde{\beta}_0+(\sqrt{2}\pm 1)\widetilde{\beta}_3=\sqrt{2}$ intersecting the half-lines
$\widetilde{\beta}_0=1, \widetilde{\beta}_3 > 1$
and $\widetilde{\beta}_0 > 1, \widetilde{\beta}_3= 1$ at points $(1,2\sqrt{2}-1)$ and $(2\sqrt{2}-1,1)$.
If $\widetilde{\beta}_0,  \widetilde{\beta}_3 > 1$ lie under the hyperbola
then one can always find a $g^2>0$ for which the stationary state is entangled. A closer look at the inequality above also shows
that one can allow a simultaneous slight increase of $\widetilde{\beta}_0$, $\widetilde{\beta}_3$ together with their difference
still obtaining an entangled steady state. Roughly speaking, larger temperature differences allow the use of hotter thermal baths,
but not too hot due to the upper bounds $\widetilde{\beta}_0,\widetilde{\beta}_3\leq 2\sqrt{2}-1$. \\
Moreover, it is clear that if $|\widetilde{\beta}_0 - \widetilde{\beta}_3|$ increases, with constant $\widetilde{\beta}_0 + \widetilde{\beta}_3$
and $g$, the minimum eigenvalue of $\widetilde{\mathbf{S}}_{\text{\tiny red}}$ becomes more negative. Therefore, one could say that
the steady state becomes ``more entangled''.
\medskip

The temperature difference, however, seems less relevant than the other parameters, therefore we now set
$\widetilde{\beta}_0 = \widetilde{\beta}_3=b$ and analyze the dependence on the two variables $g,b$.
The matrix $\widetilde{\mathbf{S}}_{\text{\tiny red}}$  becomes
\[
\widetilde{\mathbf{S}}_{\text{\tiny red}}=\left[\begin{matrix}
b \frac{ 1 + g^2+2g^4}{2g^4}& b \frac{1+g^2+g^4}{2g^4}&b\frac{1+g^2+g^4}{2g^4}-\mi&0\\
b\frac{1+g^2+g^4}{2g^4}&b\frac{1+g^2+2g^4}{2g^4}&0&b\frac{1+g^2}{2g^2}+\mi\\
b\frac{1+g^2+g^4}{2g^4}+\mi&0&b\frac{1+g^2+2g^4}{2g^4}&-b\frac{1+g^2+g^4}{2g^4}\\
0&b\frac{1+g^2}{2g^2}-\mi&-b\frac{1+g^2+g^4}{2g^4}&b\frac{1+g^2+2g^4}{2g^4}\end{matrix}\right].
\]
The determinant is now
\[
\left(b^4 \left(2 g^4+1\right)^2-2 b^2 \left(4 g^8+4 g^6+7 g^4+4 g^2+2\right)+g^8\right)/g^8
\]
and it can be factored as follows
\[
g^{-8} \left(\left(2 b^2+2 b-1\right) g^4+2 b g^2+b^2+2 b\right)\left(\left(2 b^2-2 b-1\right) g^4-2 b g^2+b^2-2 b\right)
\]
The first two factors are always positive. One can see that the third factor is negative if and only if
\[
1\leq b < \frac{1+g^2+g^4+\sqrt{(1+g^2+g^4)^2+g^4(1+2g^4)}}{1+2 g^4}.
\]
Figure \ref{fig:2noises-entgl-g-b.png} shows the region in the $(g,b)$ plane in which this inequality holds.
\begin{figure}[h]
\begin{center}
\includegraphics[width=0.64\textwidth]{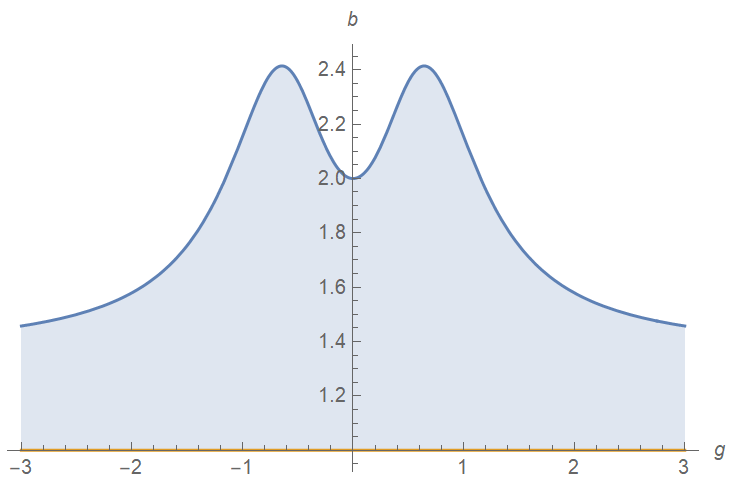}
\caption{$\kappa=1$, two noises, entangled state for $(g,b)$ in the shaded region. Comparison with Figure \ref{fig:1-noise-kappa-comp} shows that
adding noise ``improves'' entanglement.}\label{fig:2noises-entgl-g-b.png}
\end{center}
\end{figure}

Since
\begin{eqnarray*}
& & \lim_{|g|\to+\infty}\frac{1+g^2+g^4+\sqrt{(1+g^2+g^4)^2+g^4(1+2g^4)}}{1+2 g^4}= \frac{1+\sqrt{3}}{2}, \\
& & \frac{1+g^2+g^4+\sqrt{(1+g^2+g^4)^2+g^4(1+2g^4)}}{1+2 g^4}= 2 \quad \Leftrightarrow \quad g=0\ \text{or}\ g=\pm\frac{2}{\sqrt{3}}
\end{eqnarray*}
one can draw the following conclusion:

\begin{theorem}\label{th:2noise-ent}
In the two noise model with $\kappa=1$ and $\widetilde{\beta}_0=\widetilde{\beta}_3=b$ entangled stationary states arise
\begin{itemize}
\item[1.] For $1 < b \leq (1+\sqrt{3})/2$ (low temperature) and any coupling constant $g$,
\item[2.] For $(1+\sqrt{3})/2 < b \leq 2$ (``moderate/low'' temperature) for small coupling strength
\[
0<g^2<\frac{b+\sqrt{2b(b-1)(-b^2+2b+1)}}{2b^2-2b-1}
\]
\item[3.] For $2 < b < 1+\sqrt{2}$ (``moderate/high'' temperature) for small, but not too small, coupling strength,
\[
\frac{b-\sqrt{2b(b-1)(-b^2+2b+1)}}{2b^2-2b-1}<g^2<\frac{b+\sqrt{2b(b-1)(-b^2+2b+1)}}{2b^2-2b-1}
\]
\item[4.] Never for $b \geq  1+\sqrt{2}$ (high temperature).
\end{itemize}
\end{theorem}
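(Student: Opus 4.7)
By the factorization of $\det\widetilde{\mathbf{S}}_{\text{\tiny red}}$ displayed immediately above the theorem, together with the already-noted fact that in our models the reduced state is entangled if and only if this determinant is strictly negative, the problem reduces to deciding when the third factor
\[
P(x,b) := (2b^2-2b-1)\,x^2 - 2b\,x + (b^2-2b), \qquad x := g^2 > 0,
\]
is strictly negative. My plan is to view $P(\cdot,b)$ as a quadratic in $x$ and read off the four cases of the theorem from the signs of its leading coefficient, its constant term, and its discriminant.

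Three elementary observations pinpoint the thresholds appearing in the statement: the leading coefficient $2b^2-2b-1$ vanishes at $b_0 := (1+\sqrt{3})/2$, being negative on $(1,b_0)$ and positive beyond; the constant term $b^2-2b$ vanishes at $b=2$, being negative on $(1,2)$ and positive afterward; and the discriminant of $P(\cdot,b)$ in $x$ simplifies to $8\,b(b-1)(-b^2+2b+1)$, positive on $(1,1+\sqrt{2})$ and non-positive for $b\geq 1+\sqrt{2}$. The (real) roots are then
\[
x_\pm(b) = \frac{b\pm\sqrt{2b(b-1)(-b^2+2b+1)}}{2b^2-2b-1},
\]
matching the expressions in cases 2 and 3 of the statement.

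The four cases now follow by sign bookkeeping. For $1<b\leq b_0$ the leading coefficient is non-positive while $P(0,b)=b^2-2b<0$, and Vieta's formulas give a non-positive sum of roots and a non-negative product, so any real roots are non-positive; hence $P(x,b)<0$ for all $x>0$. For $b_0<b\leq 2$ the leading coefficient is positive and the product of roots $(b^2-2b)/(2b^2-2b-1)$ is non-positive, so exactly one root $x_+(b)\geq 0$ lies on the positive axis and $P<0$ precisely on $(0,x_+(b))$. For $2<b<1+\sqrt{2}$ both Vieta products are positive, hence both $x_\pm(b)$ are strictly positive and $P<0$ on $(x_-(b),x_+(b))$. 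Finally, for $b\geq 1+\sqrt{2}$ the discriminant is non-positive with positive leading coefficient, so $P(x,b)\geq 0$ everywhere and no entangled stationary state exists.

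The only slightly non-trivial ingredient is the discriminant identity $b^2-(2b^2-2b-1)(b^2-2b)=2b(b-1)(-b^2+2b+1)$, which I would verify either by direct expansion or by observing that $b=0$ and $b=1$ are manifest roots of the left-hand side viewed as a polynomial in $b$, fixing two of its three linear factors. Everything else is careful tracking of signs of a single quadratic in $g^2$, and the three distinguished values $(1+\sqrt{3})/2$, $2$, and $1+\sqrt{2}$ emerge respectively as the zeros of the leading coefficient, the constant term, and the discriminant.
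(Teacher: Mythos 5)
Your proposal is correct and follows essentially the same route as the paper: both start from the displayed factorization of $\det\widetilde{\mathbf{S}}_{\text{\tiny red}}$ and the fact (established via the sign analysis of the characteristic polynomial in the $\kappa=1$ proposition) that entanglement is equivalent to strict negativity of that determinant, and then reduce everything to the sign of the factor $(2b^2-2b-1)g^4-2bg^2+b^2-2b$. The only difference is cosmetic: you treat this factor as a quadratic in $x=g^2$ for fixed $b$ (which yields the four cases and the $g^2$-intervals of the statement directly), whereas the paper solves it as a quadratic in $b$ and then studies the resulting threshold function of $g$; your bookkeeping of the leading coefficient, constant term, discriminant and Vieta signs is accurate, including the identity $b^2-(2b^2-2b-1)(b^2-2b)=2b(b-1)(-b^2+2b+1)$.
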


Note that the bound $b \leq (1+\sqrt{3})/2$ also comes from \eqref{eq:2-noise-big-g-neg-det} setting
$\widetilde{\beta}_0=\widetilde{\beta}_3$.

\section{Conclusion and outlook}

We proved that the bipartite open system with Hamiltonian \eqref{eq:HS} and a local interaction with
one or two reservoirs has a unique Gaussian entangled stationary state and that any initial state converges towards
this stationary state under some explicit conditions on the interaction strength and reservoir temperature.
In this way, the created quantum entanglement turns out to be stable and survives for all
times also for non vanishing bath temperatures even with external noise.
We also mentioned entanglement quantification via logarithmic negativity, although we expect that it will just
confirm predictions of determinant negativity of $\widetilde{\mathbf{S}}_{\text{\tiny red}}$. \\
In our model parties $1$ and $2$ directly interact. However, it is not difficult to make them non interacting by the unitary
transformation (a passive element, in the language of \cite{WoEiPl})
\[
U = \left[\begin{array}{cccc} 1/\sqrt{2} & 1/\sqrt{2} & 0 & 0 \\
                             -1/\sqrt{2} & 1/\sqrt{2} & 0 & 0 \\
                             0 & 0 & 1/\sqrt{2} & 1/\sqrt{2} \\
                             0 & 0 & -1/\sqrt{2} & 1/\sqrt{2}
\end{array} \right]
\]
In this way, with respect to the transformed annihilation and creation operators $H_S$,
$b_1=(a_1+a_2)/\sqrt{2}, b_2=(-a_1+a_2)/\sqrt{2}$ becomes
\[
\frac{\omega}{2}\left(b_1^\dagger b_1 - b_2^\dagger b_2\right)
+ \frac{\kappa}{2}\sum_{j=1}^2\left(b_j^{\dagger\, 2} + b_j^2\right)
\]
and one finds that coupling with external baths can induce entanglement even if the system parties are not interacting. \\
It might be interesting to analyze the time dependence of entanglement. If conditions on parameters hold, a separable initial
state will become entangled in finite time because the determinant of $\widetilde{\mathbf{S}}_{\text{\tiny red}}$ is
a continuous function of time by \eqref{eq:invariantParameterEvolution} and its limit at infinity is strictly negative.
The quantification of entanglement sudden birth as well as the time behaviour of entanglement seem much more involved as
they depends on symplectic diagonalization of a time dependent family of covariance matrices.

\section*{Appendix A. Solving $\mathbf{Z}^T\mathbf{S}+\mathbf{S}\mathbf{Z}+\mathbf{C}=0$}

We consider the two noises case for simplicity. Write
\[
\mathbf{C} =\left[ \begin{array}{cccccc}  \widetilde{\beta}_0 & \dots & 0 & 0 &\dots & 0 \\
                                          \dots      & \dots & \dots & \dots & \dots & \dots \\
                                         0      & \dots & \widetilde{\beta}_3 & 0 & \dots & 0 \\
                                         0      & \dots & 0 & \widetilde{\beta}_0 & \dots & 0 \\
                                         \dots      & \dots & \dots & \dots & \dots & \dots \\
                                         0      & \dots & 0 & 0 & \dots & \widetilde{\beta}_3 \\
  \end{array}\right]
= \frac{\widetilde{\beta}_0+\widetilde{\beta}_3}{2} \mathbf{C}_1
+ \frac{\widetilde{\beta}_0-\widetilde{\beta}_3}{2} \mathbf{C}_\delta
\]
where
\[
\mathbf{C}_1 =\left[ \begin{array}{cccccccc}  1 & 0 &0 & 0 & 0 & 0 & 0 & 0 \\
                                          0 & 0 &0 & 0 & 0 & 0 & 0 & 0 \\
                                          0 & 0 &0 & 0 & 0 & 0 & 0 & 0 \\
                                          0 & 0 &0 & 1 & 0 & 0 & 0 & 0 \\
                                          0 & 0 &0 & 0 & 1 & 0 & 0 & 0 \\
                                          0 & 0 &0 & 0 & 0 & 0 & 0 & 0 \\
                                          0 & 0 &0 & 0 & 0 & 0 & 0 & 0 \\
                                          0 & 0 &0 & 0 & 0 & 0 & 0 & 1 \\
  \end{array}\right] \qquad
\mathbf{C}_\delta =\left[ \begin{array}{cccccccc}  1 & 0 &0 & 0 & 0 & 0 & 0 & 0 \\
                                          0 & 0 &0 & 0 & 0 & 0 & 0 & 0 \\
                                          0 & 0 &0 & 0 & 0 & 0 & 0 & 0 \\
                                          0 & 0 &0 & -1 & 0 & 0 & 0 & 0 \\
                                          0 & 0 &0 & 0 & 1 & 0 & 0 & 0 \\
                                          0 & 0 &0 & 0 & 0 & 0 & 0 & 0 \\
                                          0 & 0 &0 & 0 & 0 & 0 & 0 & 0 \\
                                          0 & 0 &0 & 0 & 0 & 0 & 0 & -1 \\
  \end{array}\right]
\]
Since $\mathbf{Z}$ is stable, for all symmetric matrix $\mathbf{Y}$, that the solution $\mathbf{X}$
of $\mathbf{Z}^T\mathbf{X} + \mathbf{X}\mathbf{Z}+\mathbf{Y}=0$  is
\[
\mathbf{X} = \int_0^\infty \mathrm{e}^{s\mathbf{Z}^T}\mathbf{Y}\mathrm{e}^{s\mathbf{Z}}\mathrm{d}{s}
\]
and is clearly linear in $\mathbf{Y}$. Therefore, calling $\mathbf{S}_1$ and $\mathbf{S}_\delta$ solutions corresponding to
$\mathbf{C}_1$ and $\mathbf{C}_\delta$, we have
\[
\mathbf{S} = \frac{\widetilde{\beta}_0+\widetilde{\beta}_3}{2} \mathbf{S}_1
+ \frac{\widetilde{\beta}_0-\widetilde{\beta}_3}{2} \mathbf{S}_\delta
\]

As an additional remark, note that if $\mathbf{Z}^T+\mathbf{Z}+\mathbf{C}_1 =0$, as in the case of the matrix $\mathbf{Z}$
in Section \ref{sect:two-noises-model} with $\kappa=0$, then
\[
\int_0^\infty \mathrm{e}^{s\mathbf{Z}^T}\mathbf{C}_1 \mathrm{e}^{s\mathbf{Z}}\mathrm{d}{s}
-\int_0^\infty \mathrm{e}^{s\mathbf{Z}^T} (\mathbf{Z}^T+\mathbf{Z})\mathrm{e}^{s\mathbf{Z}}\mathrm{d}{s}
= \unit_{8}
\]
and so
\[
\mathbf{S} = \frac{\widetilde{\beta}_0+\widetilde{\beta}_3}{2} \unit_8
+ \frac{\widetilde{\beta}_0-\widetilde{\beta}_3}{2} \mathbf{S}_\delta
\]

\section*{Appendix B. Inequalities}

This section contains the proofs of the elementary inequalities that we need to establish the sign of eigenvalues
of matrices $\widetilde{\mathbf{S}}_{\text{\tiny red}}$.

\begin{lemma}\label{lem:1noise-lambdaaq-pos}
For all $|\kappa|<1$, $\widetilde{\beta} > 1$ and $0<g^2<(1-\kappa^2)/|\kappa|$ we have
\begin{eqnarray}
& & 6 - (2 + g^2)^2 \kappa^2 + (-2 + g^4) \kappa^4) >0 \label{eq:1-noise-lambda2-coeff-beta} \\
& & 4+\left(2-g^2\right)^2 \kappa^2-\left(3 g^4+14\right) \kappa^4 + 2 \left(g^4+4\right) \kappa^6-2 \kappa^8 > 0
\label{eq:1-noise-lambda2-coeff} \\
& & 2  \widetilde{\beta}^2 \left(2-g^2 \kappa^2\right)+2 \left(1-\kappa^2\right) \left(g^2 \kappa^4+\left(g^4+g^2+2\right)
   \kappa^2-2\right) >0 \label{eq:1-noise-lambda-coeff}
\end{eqnarray}
\end{lemma}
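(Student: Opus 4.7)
The plan is to reduce each of the three inequalities to a polynomial positivity statement in the single variable $t=|\kappa|\in[0,1)$, by exploiting the stability constraint $g^2|\kappa|<1-\kappa^2$ (equivalently $g^4\kappa^2<(1-\kappa^2)^2$ and $g^2\kappa^2<|\kappa|(1-\kappa^2)$) to trade $g$-dependence for $\kappa$-dependence.

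I would start with \eqref{eq:1-noise-lambda-coeff}, the only inequality involving $\widetilde{\beta}$. Since $g^2\kappa^2<|\kappa|(1-\kappa^2)\leq\max_{t\in[0,1]}t(1-t^2)=2/(3\sqrt{3})<1$, the coefficient $2(2-g^2\kappa^2)$ of $\widetilde{\beta}^2$ is strictly positive, so monotonicity in $\widetilde{\beta}^2\geq 1$ reduces matters to the case $\widetilde{\beta}=1$. A direct expansion then collapses the left-hand side to
\[
2\kappa^2\bigl[\,4-2\kappa^2+g^4(1-\kappa^2)-g^2\kappa^4\,\bigr],
\]
and the bracket is bounded below by $4-2\kappa^2-\kappa^2=4-3\kappa^2>1$, using $g^4(1-\kappa^2)\geq 0$ together with $g^2\kappa^4=(g^2\kappa^2)\,\kappa^2<\kappa^2$.

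For \eqref{eq:1-noise-lambda2-coeff-beta} I would view the left-hand side as a quadratic in $v=g^2$, namely $6-4\kappa^2-2\kappa^4-4v\kappa^2-v^2\kappa^2(1-\kappa^2)$. This is concave in $v$, so its minimum on $0\leq v\leq(1-\kappa^2)/|\kappa|$ is attained at an endpoint. At $v=0$ the value factors as $2(3+\kappa^2)(1-\kappa^2)>0$; at the right endpoint it equals $\phi(|\kappa|)$, with $\phi(t):=5-4t-t^2+4t^3-5t^4+t^6$. The substitution $t=1-u$ would give
\[
\phi(1-u)=u\bigl[\,4(1-u)(u+2)+u^3\bigl((u-3)^2+1\bigr)\,\bigr],
\]
which is manifestly positive for $u\in(0,1]$, so $\phi(t)>0$ on $[0,1)$.

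Inequality \eqref{eq:1-noise-lambda2-coeff} rewrites analogously as $4+4\kappa^2-14\kappa^4+8\kappa^6-2\kappa^8-4v\kappa^2+v^2\kappa^2(1-\kappa^2)(1-2\kappa^2)$, but now the leading coefficient in $v$ changes sign at $\kappa^2=1/2$, forcing a case split. For $\kappa^2\leq 1/2$ the quadratic is convex; I would drop the nonnegative $v^2$-term and use $-4v\kappa^2\geq -4|\kappa|(1-\kappa^2)$ to reduce to positivity of $4-4t+4t^2+4t^3-14t^4+8t^6-2t^8$ on $[0,1/\sqrt{2}]$, which follows by dominating this from below by $4-4t-3t^2+4t^3$ (via $-14t^4\geq -7t^2$ and $2t^6(4-t^2)\geq 0$) and checking that the latter is strictly decreasing on the range with positive right-endpoint value $5/2-\sqrt{2}$. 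For $\kappa^2>1/2$ the quadratic is concave; bounding the (now negative) $v^2$-term below via $g^4\kappa^2\leq(1-\kappa^2)^2$ and again using the linear bound produces, after expansion and cancellation, exactly the polynomial $\phi(t)$ already handled. The main obstacle is precisely this bookkeeping for \eqref{eq:1-noise-lambda2-coeff}: selecting the direction of each bound correctly according to the sign of the $v^2$-coefficient, and recognizing the pleasant collapse back to $\phi$ in the concave regime.
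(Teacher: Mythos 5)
Your proof is correct: I checked the key identities --- the collapse of \eqref{eq:1-noise-lambda-coeff} at $\widetilde{\beta}=1$ to $2\kappa^2\bigl[4-2\kappa^2+g^4(1-\kappa^2)-g^2\kappa^4\bigr]$, the endpoint value $\phi(t)=5-4t-t^2+4t^3-5t^4+t^6$, the factorization $\phi(1-u)=u\bigl[4(1-u)(u+2)+u^3((u-3)^2+1)\bigr]$, and the reappearance of $\phi$ in the concave regime of \eqref{eq:1-noise-lambda2-coeff} --- and they all hold. The overall strategy coincides with the paper's: it too reduces \eqref{eq:1-noise-lambda-coeff} to $\widetilde{\beta}=1$ via positivity of $2-g^2\kappa^2$, writes \eqref{eq:1-noise-lambda2-coeff} as a quadratic in $g^2$ (its \eqref{eq:g-series}) and splits at $\kappa^2=1/2$ according to the sign of the $g^4$-coefficient. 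The difference is in the bookkeeping: the paper does not invoke convexity/concavity or endpoint evaluation under the sharp constraint $g^2\kappa^2<|\kappa|(1-\kappa^2)$; it simply substitutes the weaker consequences $g^2\kappa^2<1-\kappa^2$ and $g^4\kappa^2<(1-\kappa^2)^2$ term by term, which lands on expressions that factor cleanly, e.g.\ $(1-\kappa^2)(1+4\kappa^2-\kappa^4)$ for \eqref{eq:1-noise-lambda2-coeff-beta} and $2\kappa^2(1-\kappa^2)(4-3\kappa^2+\kappa^4)$ in the convex case, thereby avoiding both your sextic $\phi$ and the monotonicity argument for $4-4t-3t^2+4t^3$. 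Your sharper endpoint bounds buy nothing here and cost some extra polynomial algebra, but the argument is sound; the one pedantic caveat (shared verbatim by the paper's own wording) is that at $\kappa=0$ the value of \eqref{eq:1-noise-lambda-coeff} at $\widetilde{\beta}=1$ is exactly $0$, so strictness there must be drawn from the strictly positive coefficient of $\widetilde{\beta}^2$ together with $\widetilde{\beta}>1$, not from the bracket alone.
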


\begin{proof} First of all note that, since $|\kappa|<1$, the inequality $0<g^2<(1-\kappa^2)/|\kappa|$ implies
the weaker one $0<g^2<(1-\kappa^2)/\kappa^2$ that will be used in the proof. \\
To prove the inequality \eqref{eq:1-noise-lambda2-coeff-beta}, first write the left-hand side as
\(
6-4\kappa^2-2\kappa^4 -4g^2\kappa^2 - g^4\kappa^2\left( 1-\kappa^2\right).
\)
Therefore, by $g^2\kappa^2<(1-\kappa^2)$ and $g^4\kappa^2<(1-\kappa^2)^2$, we have
\begin{eqnarray*}
6 - (2 + g^2)^2 \kappa^2 + (-2 + g^4) \kappa^4)
& \geq & 2(1-\kappa^2)(3+\kappa^2) - 4(1-\kappa^2)-\left( 1-\kappa^2\right)^3 \\
& = & (1-\kappa^2)\left(1+4\kappa^2-\kappa^4 \right) > 0.
\end{eqnarray*}

Collecting powers of $g$, we can write \eqref{eq:1-noise-lambda2-coeff} as
\begin{equation}\label{eq:g-series}
4+4 \kappa^2-14 \kappa^4+8 \kappa^6 -2 \kappa^8-4 \kappa^2 g^2 +\left(1-2 \kappa^2\right)\left(1-\kappa^2\right)\kappa^2g^4
\end{equation}
Therefore, if $\kappa^2\geq 1/2$, by the inequality $0<g^2\kappa^2<1-\kappa^2$,
\eqref{eq:g-series} is bigger than
\begin{eqnarray*}
& & 4+4 \kappa^2-14 \kappa^4+8 \kappa^6 -2 \kappa^8 -4(1-\kappa^2) + \left(1-2 \kappa^2\right)\left(1-\kappa^2\right)^3 \\
& = & (1 - \kappa^2)  (1 + 4 \kappa^2 - \kappa^4)   > 0
\end{eqnarray*}
Otherwise, if $\kappa^2\leq 1/2$, note that the last term of \eqref{eq:g-series} is positive and hence it is bigger than
\begin{eqnarray*}
&&4+4 \kappa^2-14 \kappa^4+8 \kappa^6 -2 \kappa^8-4(1-\kappa^2)\\
&=&2\kappa^2(1-\kappa^2)(4-3\kappa^2+\kappa^4)>0
\end{eqnarray*}
and the inequality \eqref{eq:1-noise-lambda2-coeff} is proved.

To check \eqref{eq:1-noise-lambda-coeff}, note that, since $2-g^2 \kappa^2>2-(1-\kappa^2)>1$ and $\kappa^4g^2<g^2\kappa^2<1$
the left-hand side is bigger than
\begin{eqnarray*}
&& 2\left(2-g^2 \kappa^2\right)+2 \left(1-\kappa^2\right) \left(g^2 \kappa^4+\left(g^4+g^2+2\right)\kappa^2-2\right) \\
& = & 2\kappa^2 \left(4 - 2\kappa^2 - \kappa^4 g^2 + (1 - \kappa^2) g^4\right)  \\
& > & 2\kappa^2 \left(4 - 2\kappa^2 -1\right)  >0
\end{eqnarray*}
This proves the inequality \eqref{eq:1-noise-lambda-coeff}.
\end{proof}

\section*{Acknowledgement}
AD FF and DP are members of GNAMPA-INdAM.
AD and FF  acknowledge the support of the MUR grant ``Dipartimento di Eccellenza 2023--2027'' of Dipartimento di Matematica, Politecnico di Milano and ``Centro Nazionale di ricerca in HPC, Big Data and Quantum Computing''.
DP has been supported by the MUR grant ``Dipartimento di Eccellenza 2023–2027'' of Dipartimento di Matematica, Universit\`a di Genova.
HJY has been supported by the Korean government grant MSIT (No. RS-2023-00244129).

The authors would like to thank Alain Joye who brought the problem to their attention after a seminar on the results of
\cite{HaackJoye} as well as for fruitful discussions.

\bigskip

{\footnotesize
Authors' addresses:
\begin{itemize}
\item[$^{(1)}$] Mathematics Department, Politecnico di Milano,
Piazza Leonardo da Vinci 32, I - 20133 Milano, Italy 
\item[$^{(2)}$] Mathematics Department, University of Genova,
Via Dodecaneso 35, I - 16146 Genova, Italy
\item[$^{(3)}$] Department of Applied Mathematics and Institute for Integrated Mathematical Sciences, Hankyong National University, 327 Jungang-ro, Anseong-si, Gyeonggi-do 17579, Korea
\end{itemize}
}


\end{document}